\newtheorem{theorem}{Theorem}
\newtheorem{lemma}[theorem]{Lemma}
\newtheorem*{lemma*}{Lemma}
\newtheorem{proposition}[theorem]{Proposition}
\newtheorem{corollary}[theorem]{Corollary}
\newtheorem{remark}[theorem]{Remark}
\newtheorem*{open*}{Open~question}
\newtheorem{definition}[theorem]{Definition}
\renewcommand\cc{\ensuremath{\mathbb{C}}}
\newcommand\rr{\ensuremath{\mathbb{R}}}
\newcommand\nn{\ensuremath{\mathbb{N}}}
\begin{document}

\title{On the intersection of a sparse curve\\ and a low-degree curve:\\
A polynomial version of the lost theorem}

\author{Pascal Koiran, Natacha Portier, and S\'ebastien Tavenas}

\affil{LIP\footnote{UMR 5668 ENS Lyon - CNRS - UCBL - INRIA,
    Universit\'e de Lyon. Email:
    \{pascal.koiran,natacha.portier,sebastien.tavenas\}@ens-lyon.fr. 
The authors are supported by ANR project CompA (project number: ANR--13--BS02--0001--01).}, \'Ecole Normale Sup\'erieure de Lyon}

\date{\today}

\maketitle

\begin{abstract}
Consider a system of two polynomial equations in two variables: 
$$F(X,Y)=G(X,Y)=0$$
 where $F \in \rr[X,Y]$ has degree $d \geq 1$ and $G \in \rr[X,Y]$ 
has $t$ monomials.
We show that the system has only $O(d^3t+d^2t^3)$ real solutions 
when it has a finite number of real solutions.
This is the first polynomial bound for this problem.
In particular, the bounds coming from the theory of fewnomials are exponential in $t$, and count only nondegenerate solutions.
More generally, we show that if the set of solutions is infinite,
it still has at most $O(d^3t+d^2t^3)$ connected components.

By contrast, the following question seems to be open: if $F$ and $G$
have at most $t$ monomials, is the number of (nondegenerate) solutions 
polynomial in $t$?

The authors' interest for these problems was sparked by connections 
between lower bounds in algebraic complexity theory and upper bounds on
the number of real roots of ``sparse like'' 
polynomials.
\end{abstract}

\section{Introduction}

Descartes' rule of signs shows that a real univariate polynomial 
with $t \geq 1$ monomials has at most $t-1$ positive roots.
In 1980, A. Khovanskii~\cite{Kho91} obtained a far reaching
generalization. He showed that a system of $n$
polynomials in $n$ variables involving $l+n+1$ distinct monomials has
less than 
\begin{align} \label{khovanskii}
  2^{\binom {l+n}2}(n+1)^{l+n}
\end{align} 
non-degenerate positive solutions. 
Like Descartes', this bounds  depends on the number of
monomials of the polynomials but not on their degrees.

In his theory of fewnomials (a term coined by Kushnirenko),
Khovanskii~\cite{Kho91} gives
a number of results of the same flavor; some apply to non-polynomial functions.
In the case of polynomials, Khovanskii's result was improved by
Bihan and Sottile~\cite{BS07}. Their bound is
\begin{align} \label{biso}
  \frac{e^2+3}{4}2^{\binom l2}n^l.
\end{align}


In this paper, we bound the number of real solutions of a system 
\begin{align}      \label{system}
  F(X,Y)=G(X,Y)=0
\end{align}
of two polynomial equations in two variables,
where  $F$ is a polynomial of degree  $d$ and $G$ has $t$ monomials. 
This problem has a peculiar history~\cite{BS11,Kush08,Sot11}.
Sevostyanov showed
in 1978   that 
the number of nondegenerate solutions can be bounded by a  function
$N(d,t)$ which depends on $d$ and $t$ only.
According to~\cite{Sot11}, this result was the inspiration for Khovanskii to develop his theory of fewnomials.
Sevostyanov suffered an early death, and his result was never published.
Today, it seems that Sevostyanov's proof and even the specific form of
his  bound have been lost.

The results of Khovanskii~(\ref{khovanskii}), 
or of Bihan and Sottile~(\ref{biso}), imply a bound on $N(d,t)$
which is exponential in $d$ and $t$. 
Khovanskii's bound~(\ref{khovanskii}) follows from a general result 
on mixed polynomial-exponential systems (see Section~1.2 of~\cite{Kho91}).
One can check that the latter result implies a bound on $N(d,t)$ which
is exponential in $t$ only.
As we shall see, this is still far from optimal.

Li, Rojas and Wang~\cite{LRW03} showed that the number of real roots
is bounded above by  $2^t-2$ when $F$ is a trinomial.
When  $F$ is linear, this bound was improved to  $6t-4$ by
Avenda\~{n}o~\cite{Ave09}. 
The result by  Li, Rojas and Wang~\cite{LRW03} is in fact more general: 
they show that the number of
non-degenerate positive real solutions of the system 
\begin{align*}
  F_1(X_1,\ldots,X_n)=F_2(X_1,\ldots,X_n)=\ldots=F_n(X_1,\ldots,X_n)=0
\end{align*}
is at most $n + n^2 +\ldots + n^{t-1}$ when each of
$F_1,\ldots,F_{n-1}$ is a trinomial and $F_n$ has $t$ terms.

Returning to the case of a system $F(X,Y)=G(X,Y)=0$ where $F$ is a trinomial
and $G$ has $t$ terms, we obtained in~\cite{KPT13} a $O(t^3)$ upper bound
on the number of real roots. It is also worth pointing out that, 
contrary to~\cite{Ave09}, the methods of~\cite{KPT13} apply to systems with
real exponents.

The present paper deals with the general case of Sevostyanov's
system~(\ref{system}). 
We obtain the first bound which is polynomial in $d$ and $t$.
Indeed, we show that there are only $O(d^3t+d^2t^3)$ real solutions
to~(\ref{system}) when their number is finite. 
Note that we count all roots, including degenerate roots.
More generally, we show that when the set of solutions is infinite 
the same $O(d^3t+d^2t^3)$ upper bound  applies  to the number
of its connected components (but it is actually the finite case which requires
most of the work).

Note finally that our bound  applies only when $F$ is a polynomial
of degree $d \geq 1$.
As pointed out in Section~\ref{main}, the case $d=0$ is more difficult.
The reason is that a system of two sparse equations 
can be encoded in a system where $F=0$.
We do not know if the number of real roots can be bounded by a polynomial
function of~$t$ in this case.

The authors' interest for these problems was sparked by connections 
between lower bounds in algebraic complexity theory and upper bounds on
the number of real roots of ``sparse like'' 
polynomials: see~\cite{Koi10a,GKPS11,KPT13} 
as well as the earlier work~\cite{BC76,Gri82,Risler85}.

\subsection*{Overview of the proof}

As we build on results from~\cite{KPT13}, it is helpful to recall how
the case $d=1$ (intersection of a sparse curve with a line) was
treated in that paper. For a line of equation $Y=aX+b$, this amounts
to bounding the number of real roots of a univariate polynomial of the
form
\begin{align*}
  \sum_{i=1}^t c_i X^{\alpha_i} (aX+b)^{\beta_i}.
\end{align*}
This polynomial is presented as a sum of $t$ ``basis functions'' of
the form $f_i(X)=c_i X^{\alpha_i} (aX+b)^{\beta_i}.$ In order to bound
the number of roots of a sum of real analytic functions, it suffices
to bound the number of roots of their Wronskians. We recall that the
Wronskian of a family of functions $f_1,\ldots,f_k$ which are $(k-1)$
times differentiable is the determinant
of the matrix of their derivatives of order 0 up to $k-1$. More
formally,
\begin{align*}
  W(f_1,\ldots,f_k) = \det \left(\left( f_j^{(i-1)}\right)_{1\leq i,j
      \leq k}\right).
\end{align*}
In~\cite{KPT13}, we proved the following result.

\begin{theorem} \label{thm_Wronskien}
  Let $I$ be an open interval of $\rr$ 
and let $f_1, \ldots, f_t:I \rightarrow \rr$ be a family of analytic 
functions which are linearly independent on $I$. 
For $1\leq i\leq t$, let us denote by $W_i:I \rightarrow \rr$ 
the Wronskian of $f_1,\ldots,f_i$.
Then, 
  \begin{align*}
    Z(f_1+\ldots + f_t) \leq t-1 + Z(W_t)+Z(W_{t-1})+2\sum_{j=1}^{t-2}Z(W_j)
  \end{align*}
  where $Z(g)$ denotes the number of distinct real roots of 
a function $g:I \rightarrow \rr$.
\end{theorem}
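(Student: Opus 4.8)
The plan is to prove the bound on $Z(f_1+\cdots+f_t)$ by induction on $t$, using the Wronskian as the key tool to "divide out" the leading function and reduce to a sum of $t-1$ analytic functions. The base case $t=1$ is trivial (a nonzero analytic function has only isolated zeros, and the claimed bound is $0+Z(W_1)=Z(f_1)$, which needs a slight care — actually the inductive bookkeeping will track this). For the inductive step, the standard trick is to write
\begin{align*}
  f_1+\cdots+f_t = f_1\left(1 + \frac{f_2}{f_1} + \cdots + \frac{f_t}{f_1}\right),
\end{align*}
so that on any interval where $f_1$ does not vanish, the zeros of the sum are those of $g := 1 + \sum_{i\geq 2} f_i/f_1$. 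Differentiating kills the constant $1$, giving $g' = \sum_{i\geq 2}(f_i/f_1)'$, a sum of $t-1$ analytic functions on that subinterval. By Rolle's theorem, between two consecutive zeros of $g$ there is a zero of $g'$, so $Z(g) \leq Z(g') + (\text{number of subintervals}) $, where the subintervals are the connected components of $I \setminus \{f_1 = 0\}$; since $f_1$ is analytic and not identically zero, it has finitely many zeros on... well, one must be a little careful here because $I$ is open and $f_1$ could have infinitely many zeros accumulating at the endpoints, but the linear independence hypothesis together with the analyticity lets us work on compact subintervals and take a supremum, or one argues directly. I would handle this by noting $Z(f_1+\cdots+f_t) \leq Z(f_1) + Z(g)$ on the complement of the zero set of $f_1$ plus the zeros of $f_1$ themselves, organizing the count carefully.

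The crucial point is identifying $(f_i/f_1)'$ and its Wronskians with the original Wronskians. The classical identity is that the Wronskian of $f_2/f_1, \ldots, f_k/f_1$ equals $W(f_1,\ldots,f_k)/f_1^k$ (up to sign), and more relevantly, if we set $h_i = (f_{i+1}/f_1)'$ for $1 \le i \le t-1$, then $W(h_1,\ldots,h_j)$ is, up to a nonvanishing analytic factor and sign, equal to $W(f_1,\ldots,f_{j+1})$ divided by a power of $f_1$. Concretely, one uses the formula $W(f_1 g_1, \ldots, f_1 g_k) = f_1^k W(g_1,\ldots,g_k)$ and the fact that prepending a constant function and then differentiating relates $W(1, g_2, \ldots, g_k)$ to $W(g_2', \ldots, g_k')$. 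Thus the zeros of the Wronskians $W_j' := W(h_1, \ldots, h_j)$ of the new family are controlled by the zeros of $W_{j+1}$ (the old Wronskians) together with zeros of $f_1$, i.e. of $W_1$. Applying the induction hypothesis to $h_1 + \cdots + h_{t-1} = g'$ gives a bound involving $Z(W_2), \ldots, Z(W_t)$ and $Z(W_1)$, and feeding this back through the Rolle estimate yields the claimed recursion; the coefficients $1$ on $Z(W_t), Z(W_{t-1})$ and $2$ on the earlier ones emerge from how many times each old Wronskian is "used" as one unwinds the recursion.

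The main obstacle, and the step requiring the most care, is the bookkeeping of which Wronskians contribute and with what multiplicity — in particular making sure the factor $2$ appears for $W_1, \ldots, W_{t-2}$ but not for $W_{t-1}, W_t$, which forces a careful inductive hypothesis (the statement as given is exactly the right strengthening to make the induction close). A secondary technical obstacle is the non-vanishing issue: the identities above require dividing by $f_1$, so one must restrict to components of $I \setminus Z(f_1)$, handle the possibly infinite number of such components via a limiting/compactness argument, and make sure the $h_i$ remain analytic and linearly independent there (linear independence of $f_1, \ldots, f_t$ on $I$ implies, after shrinking to a subinterval where $f_1 \ne 0$, linear independence of the $h_i$ — this uses that a nonzero analytic function cannot vanish on an open set, so independence on $I$ transfers to any subinterval). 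I would also need the elementary fact that $Z(W_j) < \infty$ for each $j$, which again follows from analyticity and the linear independence hypothesis (so that no $W_j$ is identically zero). Once these points are in place, the induction runs smoothly and produces the stated inequality.
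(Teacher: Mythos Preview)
The paper does not actually prove Theorem~\ref{thm_Wronskien}; it merely quotes the result from the authors' earlier paper~\cite{KPT13} (``In~\cite{KPT13}, we proved the following result''). So there is no in-paper proof to compare against.

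That said, your outline is the standard approach and is almost certainly the one used in~\cite{KPT13}: divide through by $f_1$, differentiate to kill the constant, use the identity $W\bigl((f_2/f_1)',\ldots,(f_{j+1}/f_1)'\bigr)=W_{j+1}/f_1^{j+1}$, apply Rolle on each component of $I\setminus Z(f_1)$, and induct. Your identification of the key Wronskian identity and of the two technical issues (linear independence on subintervals, finiteness of $Z(W_j)$) is correct.

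However, your proposal stops exactly where the real difficulty begins. You write that ``the coefficients $1$ on $Z(W_t),Z(W_{t-1})$ and $2$ on the earlier ones emerge from how many times each old Wronskian is used as one unwinds the recursion,'' but you do not carry this out, and the naive bookkeeping does \emph{not} give the stated bound. Concretely: if $f_1$ has $Z(W_1)$ zeros, there are $Z(W_1)+1$ subintervals, and applying the inductive hypothesis separately on each one produces $(Z(W_1)+1)(t-2)$ copies of the additive constant, which is far too many. Already for $t=2$ the crude count gives $1+2Z(W_1)+Z(W_2)$ rather than $1+Z(W_1)+Z(W_2)$. Getting the correct constants requires an additional observation---for instance, that common zeros of $f_1$ and $f_1+\cdots+f_t$ are automatically zeros of higher Wronskians and so need not be counted twice---and a more careful global Rolle-type argument rather than a per-component one. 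This is precisely the ``care'' you allude to but do not supply, and without it the argument does not close.
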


The present paper again relies on Theorem~\ref{thm_Wronskien}.
Let us assume that for a system $F(X,Y)=G(X,Y)=0$, 
we can use the equation $F(X,Y)=0$ to express $Y$ as an (algebraic) function
of $X$. Then we just have to bound the number of real roots of a
univariate polynomial of the form 
$$\sum_{i=1}^t c_i X^{\alpha_i} \phi(X)^{\beta_i},$$
and this is a situation where we can apply Theorem~\ref{thm_Wronskien}.
Of course, turning this informal idea into an actual  proof requires some
care. In particular, the algebraic function $\phi$ needs not be defined
on the whole real line, and it needs not be uniquely defined.
We deal with those issues using Collin's cylindrical algebraic decomposition
(see Section~\ref{cad}).
We also need some quantitative estimates on 
the higher-order derivatives of the algebraic function $\phi$ because
they appear in the Wronskians of Theorem~\ref{thm_Wronskien}.
For this reason, we express in Section~\ref{algder} 
the derivatives of $\phi$ in terms of $\phi$ and 
of the partial derivatives of $F$.
Using Theorem~\ref{thm_Wronskien}, we can ultimately reduce Sevostyanov's 
problem to the case of a system where both polynomials have bounded degree.
The relevant bounds for this case are recorded in Section~\ref{real}.
We put these ingredients together in Section~\ref{main}
to obtain the $O(d^3t+d^2t^3)$ bound on the number of connected components.

\section{Technical Tools}

In this section we collect various  results that are required 
for the main part of this paper (Section~\ref{main}).
On first reading, there is no harm in beginning with Section~\ref{main};
the present section can be consulted when the need arises.

\subsection{The derivatives of a power}

In this section, we recall how the derivatives of a power of a
univariate function~$f$ can be expressed in terms of the derivatives
of $f$.  We use ultimately vanishing sequences of integer numbers,
i.e., infinite sequences of integers which have only finitely many
nonzero elements. We denote the set of such sequences
$\nn^{(\nn)}$. For any positive integer $p$, let $\mathscr{S}_p = \{
(s_1,s_2,\ldots) \in \nn^{(\nn)} | \
\underset{i=1}{\overset{\infty}{\sum}} i s_i = p \}$ (so in particular
for each $p$, this set is finite). Then if $s$ is in $\mathscr{S}_p$,
we observe that for all $i\geq p+1$, we have $s_i=0$. Moreover for any
$p$ and any $s=(s_1,s_2,\ldots) \in \nn^{(\nn)}$, we will denote $|s|
= \underset{i=1}{\overset{\infty}{\sum}} s_i$ (the sum makes sense
because it is finite).  A proof of the following simple lemma can be
found in~\cite{KPT13}.

\begin{lemma} \label{lem_power}[Lemma 10 in~\cite{KPT13}] Let $p$ be a
  positive integer.  Let $f$ be a real function and $\alpha \geq p$ be
  a real number such that $f$ is always non-negative or $\alpha$ is an
  integer (this ensures that the function $f^{\alpha}$ is well defined).
  Then \[\left(f^\alpha \right)^{(p)} = \sum_{s \in \mathscr{S}_{p}}
  \left[ \beta_{\alpha,s} f^{\alpha-|s|} \prod_{k=1}^p \left( f^{(k)}
    \right)^{s_k} \right] \] where $(\beta_{\alpha,s})$ are some
  constants.
\end{lemma}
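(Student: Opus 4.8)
The plan is to prove the identity by induction on $p$, differentiating the claimed expansion of $\left(f^\alpha\right)^{(p)}$ term by term and checking that every summand produced is again of the prescribed shape and indexed by some element of $\mathscr{S}_{p+1}$. (Alternatively one can invoke Fa\`a di Bruno's formula applied to $g\circ f$ with $g(x)=x^\alpha$: since $g^{(m)}(x)=\alpha(\alpha-1)\cdots(\alpha-m+1)\,x^{\alpha-m}$, one reads off $\beta_{\alpha,s}=\frac{p!}{\prod_i s_i!\,(i!)^{s_i}}\,\alpha(\alpha-1)\cdots(\alpha-|s|+1)$. The inductive argument below is elementary and self-contained, so I present that one.)

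For the base case $p=1$ we have $\mathscr{S}_1=\{(1,0,0,\ldots)\}$, and the chain rule gives $\left(f^\alpha\right)'=\alpha\,f^{\alpha-1}f'$, which is the asserted form with $|s|=1$ and $\beta_{\alpha,s}=\alpha$; the exponent $\alpha-1$ is nonnegative because $\alpha\geq 1$. For the inductive step, fix $p$ and take $\alpha\geq p+1$, so in particular $\alpha\geq p$ and the inductive hypothesis at level $p$ applies. A first observation is that every $s\in\mathscr{S}_p$ satisfies $|s|=\sum_i s_i\leq\sum_i i s_i=p\leq\alpha$, so all exponents $\alpha-|s|$ occurring on the right-hand side are nonnegative and $f^{\alpha-|s|}$ is well defined; moreover $s_i=0$ for $i>p$, so each product $\prod_{k=1}^p \left(f^{(k)}\right)^{s_k}$ involves only derivatives of $f$ of order at most $p$. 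Now differentiate a single summand $\beta_{\alpha,s}\,f^{\alpha-|s|}\prod_{k=1}^p \left(f^{(k)}\right)^{s_k}$ by the product and chain rules. Differentiating the factor $f^{\alpha-|s|}$ produces $(\alpha-|s|)\beta_{\alpha,s}\,f^{\alpha-|s|-1}f'\prod_{k=1}^p\left(f^{(k)}\right)^{s_k}$, which is precisely the prescribed term for the sequence $s'$ obtained from $s$ by increasing its first coordinate by $1$; here $\sum_i i s'_i=\sum_i i s_i+1=p+1$, so $s'\in\mathscr{S}_{p+1}$. Differentiating a factor $\left(f^{(k)}\right)^{s_k}$, for $k$ with $s_k\geq 1$, produces $s_k\beta_{\alpha,s}\,f^{\alpha-|s|}\left(f^{(k)}\right)^{s_k-1}f^{(k+1)}\prod_{j\neq k}\left(f^{(j)}\right)^{s_j}$, which is the prescribed term for the sequence $s'$ obtained from $s$ by decreasing coordinate $k$ by $1$ and increasing coordinate $k+1$ by $1$; here $\sum_i i s'_i=\sum_i i s_i-k+(k+1)=p+1$, so again $s'\in\mathscr{S}_{p+1}$, and when $k=p$ this produces the derivative $f^{(p+1)}$, which is legitimate since sequences in $\mathscr{S}_{p+1}$ may have a nonzero coordinate at index $p+1$.

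Summing these contributions over all $s\in\mathscr{S}_p$ and grouping the finitely many terms that land on each fixed $s'\in\mathscr{S}_{p+1}$ yields $\left(f^\alpha\right)^{(p+1)}=\sum_{s'\in\mathscr{S}_{p+1}}\beta_{\alpha,s'}\,f^{\alpha-|s'|}\prod_{k\geq 1}\left(f^{(k)}\right)^{s'_k}$ with new constants $\beta_{\alpha,s'}$ equal to the appropriate finite sums (some possibly zero); since $p+1\leq\alpha$, the exponents $\alpha-|s'|$ are again nonnegative, so the expression is meaningful. This completes the induction, and since the choice of $p$ was arbitrary the lemma follows.

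The only point requiring care — the main obstacle, such as it is — is the bookkeeping: checking that each of the two ways of differentiating a monomial in $f,f',\ldots,f^{(p)}$ raises the weight $\sum_i i s_i$ by exactly one and never introduces a derivative of order exceeding $p+1$, together with the trivial but necessary verification that the hypothesis $\alpha\geq p$ keeps every exponent of $f$ nonnegative, so that the right-hand side makes sense even when $f$ is merely nonnegative rather than strictly positive.
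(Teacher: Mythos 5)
Your induction on $p$ is correct: the two ways of differentiating a monomial (hitting $f^{\alpha-|s|}$, which bumps $s_1$, or hitting $\left(f^{(k)}\right)^{s_k}$, which moves one unit from index $k$ to $k+1$) each raise the weight $\sum_i i s_i$ by exactly one, and your check that $\alpha\geq p$ keeps all exponents nonnegative is the right bookkeeping. The paper itself does not reprove this lemma but cites Lemma~10 of~\cite{KPT13}, where it is established by essentially the same elementary induction, so your argument matches the intended proof.
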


The order of differentiation of a monomial $\prod_{k=1}^p
(f^{(k)})^{s_k}$ is $\sum_{k=1}^p ks_k$. The order of differentiation
of a differential polynomial is the maximal order of its
monomials. For example: if $f$ is a function, the total order of
differentiation of
$f^3\left(f^\prime\right)^2\left(f^{(4)}\right)^3+3ff^\prime$ is $\max
(3*0+2*1+3*4,0*1+1*1)=14$.

Lemma~\ref{lem_power} just means that the $p$-th derivative of an
$\alpha$th power of a function $f$ is a linear combination of
terms such that each term is a product of derivatives of $f$ of total
degree $\alpha$ and of total order of differentiation $p$.

\subsection{The derivatives of an algebraic function} \label{algder}

Consider a nonzero bivariate polynomial $F(X,Y) \in \rr[X,Y]$ and a
point $(x_0,y_0)$ where $F(x_0,y_0)=0$ and the partial derivative
$F_Y=\frac{\partial F}{\partial Y}$ does not vanish.  By the implicit
function theorem, in a neighborhood of $(x_0,y_0)$, the equation
$F(x,y)=0$ is equivalent to a condition of the form $y=\phi(x)$. The
implicit function $\phi$ is defined on an open interval $I$ containing
$x_0$, and is $C^{\infty}$ (and even analytic).  In this section, we
express the derivatives of $\phi$ in terms of $\phi$ and of the
partial derivates of $F$.  For any integers $a, b$, we denote
$F_{X^aY^b}=\frac{\partial^{a+b}}{\partial X^a\partial Y^b} F(X,Y)$.

\begin{lemma}\label{Lem_higherDerivative}
  For all $k\geq 1$, there exists a polynomial $S_k$ 
  of degree at most $2k-1$ in $\binom{k+2}2-1$
  variables such that 
  \begin{align}\label{Eq_RationnalFunction}
    \phi^{(k)}(x) =
    \frac{S_k\left(F_X(x,\phi(x)),\ldots,F_{X^aY^b}(x,\phi(x)),\ldots\right)}{(F_Y(x,\phi(x)))^{2k-1}}
  \end{align}
  with $1\leq a+b\leq k$. Consequently, the numerator is a polynomial
  of total degree at most
  $(2k-1)d$ in $x$ and $\phi(x)$. Moreover, $S_k$ depends only on $k$
  and $F$.
\end{lemma}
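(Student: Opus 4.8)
The plan is to prove the identity (\ref{Eq_RationnalFunction}) by induction on $k$, and to carry the degree bounds through the induction along with the identity itself. For $k=1$, differentiating $F(x,\phi(x))=0$ with respect to $x$ gives $F_X(x,\phi(x)) + F_Y(x,\phi(x))\phi'(x)=0$, hence $\phi'(x) = -F_X/F_Y$, so we may take $S_1 = -F_X$, a polynomial of degree $1$ in the two variables $F_X, F_Y$; this matches $2k-1=1$ and $\binom{3}{2}-1 = 2$ variables. The inductive step is the heart of the argument: assuming $\phi^{(k)} = S_k(\ldots)/F_Y^{2k-1}$, differentiate once more with respect to $x$ using the quotient rule. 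The key observation is that each $F_{X^aY^b}(x,\phi(x))$, when differentiated in $x$, produces $F_{X^{a+1}Y^b} + F_{X^aY^{b+1}}\cdot\phi'$ by the chain rule, and $\phi' = -F_X/F_Y$ again introduces only one more factor of $F_Y$ in the denominator. One then clears denominators: $\frac{d}{dx}(S_k/F_Y^{2k-1})$ has, after putting everything over a common denominator, denominator $F_Y^{2(2k-1)} \cdot F_Y = F_Y^{4k-1}$ a priori, but a factor of $F_Y^{2k}$ cancels, leaving $F_Y^{2k+1}=F_Y^{2(k+1)-1}$, as required. This cancellation is the step I expect to be the main obstacle: one must argue cleanly that differentiating the quotient and substituting $\phi'=-F_X/F_Y$ yields something of the form $S_{k+1}/F_Y^{2k+1}$ with $S_{k+1}$ a genuine polynomial, rather than just tracking the worst-case denominator.

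For the bookkeeping on the number of variables: the partial derivatives appearing in $S_k$ are exactly the $F_{X^aY^b}$ with $1 \le a+b \le k$, and there are $\sum_{j=1}^{k} (j+1) = \binom{k+2}{2}-1$ such derivatives (the "$-1$" removing $F$ itself, i.e. the case $a+b=0$); passing from $k$ to $k+1$ introduces the derivatives of order $k+1$, consistent with $\binom{k+3}{2}-1$ variables at the next level. For the degree of $S_k$ in these variables: the induction hypothesis gives $\deg S_k \le 2k-1$; the operations performed are (i) differentiating $S_k$, which does not raise the degree in the $F_{X^aY^b}$ viewed as abstract variables provided we account for the extra variables introduced, (ii) multiplying by the numerator $-F_X$ coming from $\phi'$, which adds $1$ to the degree, and (iii) multiplying through by powers of $F_Y$ to clear the denominator, which after the cancellation described above adds exactly $2$; a careful accounting gives $\deg S_{k+1} \le (2k-1) + 2 = 2k+1 = 2(k+1)-1$. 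The claim that $S_k$ depends only on $k$ and $F$ (and not on the base point $(x_0,y_0)$ or the branch $\phi$) is immediate from the construction, since at no point do we use anything about $\phi$ beyond the relations $F(x,\phi(x))=0$ and $\phi'=-F_X/F_Y$.

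Finally, the statement about the total degree of the numerator in $x$ and $\phi(x)$ follows by composition: each $F_{X^aY^b}(X,Y)$ has total degree at most $d$ in $X,Y$ (since $F$ has degree $d$ and differentiation only lowers the degree), so substituting these into a polynomial $S_k$ of degree at most $2k-1$ yields a polynomial in $X,\phi(X)$, hence in $x, \phi(x)$, of total degree at most $(2k-1)d$. The only subtlety worth a remark is that when $\phi$ arises as a branch of an algebraic function that need not be a polynomial, the expression $S_k(\ldots)/F_Y^{2k-1}$ should be read as an identity of real analytic functions on the interval $I$ where $\phi$ is defined and $F_Y(x,\phi(x)) \ne 0$; the induction is valid there since all manipulations are differentiations of analytic functions and divisions by the nowhere-vanishing function $F_Y(x,\phi(x))$.
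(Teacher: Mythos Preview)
Your approach is correct and in fact somewhat more direct than the paper's. The paper proceeds in two stages: it first proves by induction that the $k$-th total derivative $D_k(x)=\frac{\partial^k}{\partial x^k}F(x,\phi(x))$ has the shape $\phi^{(k)}F_Y + R_k(\phi',\ldots,\phi^{(k-1)};F_{X^aY^b})$ with $R_k$ linear in the partials and of ``derivation order'' at most $k$ in the $\phi^{(i)}$; then, using $D_k=0$, it substitutes the inductive expressions $\phi^{(i)}=S_i/F_Y^{2i-1}$ back into $R_{k+1}$ and tracks degrees monomial by monomial. You instead differentiate $S_k/F_Y^{2k-1}$ directly and substitute $\phi'=-F_X/F_Y$ immediately, which collapses the two inductions into one. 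Both approaches prove the same thing; yours avoids introducing the auxiliary polynomials $R_k$, while the paper's decoupling makes the structure of $D_k$ (linearity in the partials, control of derivation order) visible on its own, which is conceptually nice even if not strictly needed here.

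One small correction to your write-up: in the cancellation step, the exponent is off. After the quotient rule and the substitution $\phi'=-F_X/F_Y$, putting everything over the common denominator $F_Y^{4k-1}$ as you say, the numerator carries a factor $F_Y^{2k-2}$ (not $F_Y^{2k}$), and indeed $(4k-1)-(2k-2)=2k+1$. Concretely, with $A=\sum_{a,b}(\partial S_k/\partial F_{X^aY^b})F_{X^{a+1}Y^b}$ and $B=\sum_{a,b}(\partial S_k/\partial F_{X^aY^b})F_{X^aY^{b+1}}$ one gets
\[
S_{k+1} \;=\; (AF_Y-BF_X)\,F_Y \;-\; (2k-1)\,S_k\,(F_{XY}F_Y - F_{Y^2}F_X),
\]
each of whose terms visibly has degree at most $(2k-1)+2=2k+1$ in the variables $F_{X^aY^b}$. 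This also makes your degree accounting in (i)--(iii) precise. With this fix, your argument is complete.
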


\begin{proof} For all $k$, let $D_k(x) = \frac{\partial^k}{\partial
    x^k}F(x,\phi(x))$. We will use later the fact that $D_k(x)$ is the
  identically zero function. We begin by showing by induction that for
  all $k\geq1$,
  $D_k(x)=\phi^{(k)}F_Y+R_k(\phi^\prime(x),\ldots,\phi^{(k-1)}(x),\ldots,F_{X^aY^b},\ldots)$
  where $R_k$ is of total degree at most $1$ in $(F_{X^aY^b})_{1\leq
    a+b\leq k}$ and of derivation order at most $k$ in the variables
  $\left(\phi^{(i)}\right)_{1\leq i<k}$.

  For $k=1$, we get: $D_1=\phi^\prime F_Y+F_X$.  Let us suppose now
  that the result is true for a particular $k$, then
  \begin{align*}
    D_{k+1}= &
    \phi^{(k+1)}F_Y+\phi^{(k)}(F_{XY}+F_{Y^2}\phi^\prime)+\frac\partial{\partial
      x}R_k\left(\phi^\prime,\ldots,\phi^{(k-1)},F_{X^aY^b}\right) \\
    =&
    \phi^{(k+1)}F_Y+R_{k+1}\left(\phi^\prime,\ldots,\phi^{(k)},F_{X^aY^b}\right).
  \end{align*}
  By induction hypothesis, each monomial of $R_k$ is of the form 
  \begin{align*}
    F_{X^aY^b}\prod_{i=1}^{k-1} \left(\phi^{(i)}\right)^{s_i}
    \textrm{ where }\sum_{i=1}^{k-1} is_i \leq k.
  \end{align*}
  Differentiating this monomial increases its order of differentiation
  by one at most.
  Indeed,
  \begin{align*}
    & \frac{\partial}{\partial x} \left(F_{X^aY^b}\prod_{i=1}^{k-1}
      \left(\phi^{(i)}\right)^{s_i} \right)  \\
    & = (F_{X^{a+1}Y^b}+\phi^\prime F_{X^aY^{b+1}}) \prod_{i=1}^{k-1}
    \left(\phi^{(i)}\right)^{s_i} \\
    & \quad\quad + F_{X^aY^b}\sum_{i=1}^{k-1} s_i \phi^{(i+1)}
    \left(\phi^{(i)}\right)^{s_i-1} \prod_{j\neq i} \left(\phi^{(j)}
    \right)^{s_j}.
  \end{align*}
  Hence, $R_{k+1}$ is of total degree at most $1$ in the variables
  $(F_{X^aY^b})_{1\leq a+b\leq k+1}$ and of derivation order at most
  $k+1$ in $\left(\phi^{(i)}\right)_{1\leq i\leq k}$.

  As $F(x,\phi(x))$ is zero, then for all $k\geq1$ we have
  $D_k(x)=\frac{\partial^k F(x,\phi(x))}{\partial x^k} =0$. Thus
  \begin{align*}
    \phi^{(k)}= \frac{-R_k}{F_Y}.
  \end{align*}
  Then we show by induction over $k$ that for all $k\geq 1$ there
  exists a polynomial $S_k$ of degree at most $2k-1$ in $\left(\binom{k+2}2-1\right)$ variables
  such that Equation~(\ref{Eq_RationnalFunction}) is verified.
  
  The result is true for $k=1$ since
  $\phi^\prime=\frac{-F_X}{F_Y}$. Let $k\geq1$ and we suppose that the result is
  true for all $i$ such that $1\leq i\leq k$. We know that
  $D_{k+1}(x)=\phi^{(k+1)}F_Y+R_{k+1}(\phi^\prime(x),\ldots,\phi^{(k)}(x),\ldots,F_{X^aY^b},\ldots)=0$. So,
  \begin{align*}
    \phi^{(k+1)}=\frac{-1}{F_Y}R_{k+1}(\phi^\prime(x),\ldots,\phi^{(k)}(x),\ldots,F_{X^aY^b},\ldots).
  \end{align*}
  So, by induction hypothesis,
  \begin{align*}
    \phi^{(k+1)}=\frac{-1}{F_Y}R_{k+1}\left(\frac{S_1}{F_Y},\ldots,\frac{S_k}{F_Y^{2k-1}},\ldots,F_{X^aY^b},\ldots\right).
  \end{align*}
  As $R_{k+1}(\phi^\prime(x),\ldots,\phi^{(k)}(x),\ldots,F_{X^aY^b},\ldots)$ is of derivation order $k+1$ on its $k$ first variables
  and is of total order $1$ on its $\left(\binom{k+2}2-1\right)$ last
  variables, each monomial is of the form:
  \begin{align*}
    F_{X^aY^b}\frac{S_{i_1}}{F_Y^{2i_1-1}}\ldots \frac{S_{i_p}}{F_Y^{2i_p-1}}
  \end{align*} with $i_1+\ldots+i_p\leq k+1$ and $p\geq 0$.
  Hence, we get:    
  \begin{align*}
    \frac{F_{X^aY^b} S_{i_1}\ldots S_{i_p}}{F_Y^{2i_1-1}\ldots F_Y^{2i_p-1}}=
    \frac{F_{X^aY^b} S_{i_1}\ldots S_{i_p}F_Y^{2k-2i+p}}{F_Y^{2(k+1)-2}}
  \end{align*} where $i=i_1+\ldots+i_p\leq k+1$. Indeed, the exponent
  $2k-2i+p$ is a non-negative integer since if $p=1$, then
  $2i=2i_1\leq 2k$ and otherwise $2i\leq 2(k+1) \leq 2k+p$. The numerator is a polynomial in the variables $F_{X^aY^b}$
  of degree 
  \begin{align*}
    & \leq 1+\deg(S_{i_1})+\ldots+\deg(S_{i_p})+2k-2i+p\\
    & \leq 1+2i_1-1+\ldots+2i_p-1+2k-2i+p\\
    & \leq 1+2i-p+2k-2i+p\\
    & \leq 2(k+1)-1.
  \end{align*}
  So, $\phi^{(k+1)}$ is of the form:
  \begin{align*}
    \frac{S_{k+1}\left(\left(F_{X^aY^b}\right)_{1\leq a+b\leq
          k+1}\right)}{F_Y^{2(k+1)-1}}
  \end{align*} where $S_{k+1}$ is a polynomial of degree at most $2(k+1)-1$.
\end{proof}

\subsection{Real versions of B\'ezout's theorem} \label{real}

B\'ezout's theorem is a fundamental result in algebraic geometry.
One version of it is as follows.
\begin{theorem}\label{Thm_Bezout}
Consider an algebraically closed field $K$ and $n$ polynomials 
$f_1,\ldots,f_n \in K[X_1,\ldots,X_n]$ of degrees $d_1,\ldots,d_n$. 
If the polynomial system
$$f_1=f_2=\cdots=f_n=0$$
has a finite number of solutions in $K^n$, this number is at most $\displaystyle \prod_{i=1}^n d_i$.
\end{theorem}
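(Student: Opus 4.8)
This is the classical Bézout bound, so the plan is to follow a standard route, with one genuinely delicate point. First I would reduce it to a purely algebraic estimate: it suffices to prove that whenever $g_1,\ldots,g_n\in K[X_1,\ldots,X_n]$ have degrees at most $d_1,\ldots,d_n$ and $A:=K[X_1,\ldots,X_n]/(g_1,\ldots,g_n)$ is finite dimensional over $K$, then $\dim_K A\le\prod_i d_i$. Indeed, if the solution set $Z$ of the given system is finite, then every prime of $K[X_1,\ldots,X_n]$ containing $(f_1,\ldots,f_n)$ cuts out an irreducible closed subset of the finite set $Z$, hence a point, hence is maximal; so $K[X_1,\ldots,X_n]/(f_1,\ldots,f_n)$ is $0$-dimensional and Noetherian, i.e.\ Artinian, hence finite dimensional over $K$, and the evaluation map $K[X_1,\ldots,X_n]/(f_1,\ldots,f_n)\to K^{Z}$ is onto by interpolation at distinct points, so $|Z|\le\dim_K K[X_1,\ldots,X_n]/(f_1,\ldots,f_n)$.

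Next I would prove $\dim_K A\le\prod_i d_i$ first in the favourable case where the leading (top-degree) forms $\bar g_1,\ldots,\bar g_n$ have no common zero other than the origin --- that is, the system has no solutions ``at infinity.'' Then $\bar g_1,\ldots,\bar g_n$ generate an ideal of finite colength, hence form a regular sequence ($K[X_1,\ldots,X_n]$ being Cohen--Macaulay), so the Hilbert series of $K[X_1,\ldots,X_n]/(\bar g_1,\ldots,\bar g_n)$ is $\prod_{i=1}^n(1+t+\cdots+t^{d_i-1})$, whose value at $t=1$ gives $\dim_K K[X_1,\ldots,X_n]/(\bar g_1,\ldots,\bar g_n)=\prod_i d_i$. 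Since the associated graded ring of $A$ for the degree filtration is the quotient of $K[X_1,\ldots,X_n]$ by the initial ideal of $(g_1,\ldots,g_n)$, which contains $(\bar g_1,\ldots,\bar g_n)$, this case follows.

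For the general case I would deform the given system to a generic one. Fix generic forms $h_1,\ldots,h_n$ with $\deg h_i=d_i$ and set $g_i^{(u)}=f_i+u\,h_i$ for a parameter $u\in K$. For generic $u$ the leading forms of $g_1^{(u)},\ldots,g_n^{(u)}$ are $n$ generic forms of degrees $d_1,\ldots,d_n$ in $n$ variables, and $n$ such forms have no common zero in $\mathbb{P}^{n-1}$ (the tuples that do form a proper subvariety of the parameter space); so for generic $u$ the deformed system falls under the previous paragraph and has at most $\prod_i d_i$ solutions. It then remains to see that this count dominates $|Z|$: in the total family $\widetilde Z=\{(x,u):g_1^{(u)}(x)=\cdots=g_n^{(u)}(x)=0\}\subseteq K^n\times K$, every irreducible component through a point $(p,0)$ with $p\in Z$ has dimension $\ge 1$ (it is defined by $n$ equations in $K^{n+1}$) and cannot lie in $\{u=0\}$ (since $\widetilde Z\cap\{u=0\}=Z\times\{0\}$ is finite), hence is an irreducible curve dominating the $u$-line; thus each $p\in Z$ is a limit as $u\to 0$ of solutions of the deformed system, distinct $p$'s contributing distinct such solutions, while a dominant morphism from an irreducible curve to the affine line has no fibre with more points than the generic one. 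Combining these, $|Z|\le\prod_i d_i$.

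I expect the last step to be the main obstacle. Controlling the affine solution count along the deformation --- equivalently, ruling out that an intersection with positive-dimensional components ``at infinity'' inflates the count beyond $\prod_i d_i$ --- is precisely the classical gap between affine and projective Bézout; the delicate ingredients are the finiteness of the generic fibre of $\widetilde Z\to K$ and the separation of the limiting solutions attached to distinct $p\in Z$. The earlier steps are essentially bookkeeping, the only real computation being the Hilbert-series identity for a regular sequence of forms.
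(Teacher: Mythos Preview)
The paper does not prove this theorem: it is stated as a classical fact (``a fundamental result in algebraic geometry'') and used as a black box in Section~\ref{real}, with no argument given. So there is no proof in the paper to compare your attempt against.

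That said, your sketch is a reasonable route to B\'ezout's inequality. The reduction to $\dim_K K[X_1,\ldots,X_n]/(f_1,\ldots,f_n)\le\prod d_i$ and the Hilbert-series computation in the ``no solutions at infinity'' case are standard and correct. The deformation step is the right idea but, as you yourself flag, is where the real work hides. Two points deserve more care: (i) you should check that every irreducible component of the total family $\widetilde Z$ has dimension exactly $1$ (you know $\ge 1$ from the number of equations; $\le 1$ follows because the generic fibre over $u$ is finite, which you established), and (ii) the assertion that a dominant map from an irreducible curve to $\mathbb A^1$ has no fibre larger than the generic one is true but not immediate for a possibly singular, non-proper curve---it goes through normalization and compactification to a finite map of smooth projective curves. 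With those details filled in, the argument is sound; but for the purposes of this paper none of it is needed, since B\'ezout is simply quoted.
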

The upper bound $\prod_{i=1}^n d_i$ may not apply if $K$ is not
algebraically closed. In particular, it fails for the field of real
numbers (see e.g. chapter~16 of~\cite{BCSS98} for a counterexample).
Nevertheless, there is a large body of work establishing bounds of a
similar flavor for $K=\rr$ (see e.g.~\cite{BPR06,BCSS98} and the
references therein).  For instance, we have the following classic
result.
\begin{theorem}[Oleinik-Petrovski-Thom-Milnor] \label{BCSSbezout} Let
  $V \subseteq \rr^n$ be defined by a system $f_1 = 0,\ldots, f_p =
  0$, where the $f_i$ are real polynomials of degree at most $d$
  with $d\geq 1$.  Then the number of connected components of $V$
  is at most $d(2d-1)^{n-1}$.
\end{theorem}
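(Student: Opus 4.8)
The plan is to follow the classical Morse-theoretic argument of Milnor and Thom, with B\'ezout's theorem (Theorem~\ref{Thm_Bezout}) supplying the final count.

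\emph{Reduction to a single smooth compact hypersurface.} First I would replace the system by the single equation $f:=f_1^2+\cdots+f_p^2=0$, so that $V=Z(f)$ and $\deg f\le 2d$ (the trivial case where all $f_i$ vanish identically, giving $V=\rr^n$, is dealt with separately). Since $V$ is semialgebraic it has finitely many connected components, each of which is a path component, and intersecting with a ball of large enough radius does not change their number; so we may assume $V$ is bounded, hence compact. By Sard's theorem there is an arbitrarily small regular value $\epsilon>0$ of $f$, and for $\epsilon$ small the sublevel set $\{f\le\epsilon\}$ is a compact neighbourhood of $V$ that deformation retracts onto $V$; hence $V$ and $\{f\le\epsilon\}$ have the same number of connected components. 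Each connected component of $\{f\le\epsilon\}$ is a compact proper subset of $\rr^n$, so it has nonempty boundary contained in the \emph{smooth} compact hypersurface $M:=\{f=\epsilon\}$, and boundaries of distinct components are disjoint. Therefore the number of connected components of $V$ is at most that of $M$.

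\emph{Morse theory and the critical system.} For a generic direction $a\in\rr^n$ the linear form $\ell_a(x)=\langle a,x\rangle$ restricts to a Morse function on $M$ and is non-constant on every connected component of $M$; since each component of $M$ is compact and (for $n\ge 2$) of positive dimension, $\ell_a$ attains there both a maximum and a minimum at two \emph{distinct} critical points. Thus the number of connected components of $M$ is at most half the number of critical points of $\ell_a|_M$. By the Lagrange condition, $x\in M$ is critical for $\ell_a|_M$ exactly when $\nabla f(x)$ is parallel to $a$; applying an orthogonal change of coordinates so that $a=e_n$, the critical points are the solutions of the polynomial system
\begin{align*}
  f(x)-\epsilon=0,\qquad \frac{\partial f}{\partial x_1}(x)=\cdots=\frac{\partial f}{\partial x_{n-1}}(x)=0,
\end{align*}
consisting of one equation of degree $\le 2d$ and $n-1$ equations of degree $\le 2d-1$.

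\emph{Counting with B\'ezout.} For generic $a$ this system has only finitely many complex solutions (genericity of $a$ forces the critical locus to be zero-dimensional), so Theorem~\ref{Thm_Bezout} bounds the number of complex solutions, and a fortiori the number of real critical points, by $2d\,(2d-1)^{n-1}$. Combining with the factor $\tfrac12$ from the previous step gives at most $d\,(2d-1)^{n-1}$ connected components of $M$, hence of $V$. The case $n=1$ is immediate: either some $f_i\neq 0$, so $V$ is a finite set of at most $d$ points, or $V=\rr$.

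\emph{Main obstacle.} Everything nontrivial is concentrated in the reduction step: justifying that truncating by a large ball and passing to a small sublevel set $\{f\le\epsilon\}$ preserves the number of connected components (the homotopy equivalence $\{f\le\epsilon\}\simeq V$), that a generic $\epsilon$ makes $M$ smooth, and that a generic $a$ simultaneously makes $\ell_a|_M$ Morse and the complex critical system finite so that B\'ezout applies with the stated degrees $2d$ and $2d-1$. These are standard Sard/transversality arguments, but they are where the care lies; once they are in place, the B\'ezout bound and the Morse-theoretic factor $\tfrac12$ yield the estimate with no further computation.
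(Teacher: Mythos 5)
The paper does not actually prove this theorem: it is quoted as a classical result and the proof is delegated to Chapter~16 of \cite{BCSS98}. Your outline is the standard Milnor--Thom argument from that literature, and the final bookkeeping is exactly right (one equation of degree $\le 2d$, $n-1$ equations of degree $\le 2d-1$, B\'ezout, and a factor $\tfrac12$ from the maximum and minimum of $\ell_a$ on each compact component). However, two of the steps you defer to ``standard arguments'' are genuine gaps rather than routine Sard/transversality checks.

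First, the compactness reduction fails as stated. ``Intersecting with a ball of large enough radius does not change the number of components'' is false in general (an unbounded connected component can be cut into several pieces by the sphere), and the inequality you actually need --- that distinct components of $V$ remain in distinct components of $\{f\le\epsilon\}$ --- genuinely fails for unbounded $V$. Take $V=\{y(xy-1)=0\}\subset\rr^2$, which has three components, and $f=y^2(xy-1)^2$. For every $\epsilon>0$ the set $\{f\le\epsilon\}$ is connected: walk along the line $y=0$ out to $x=N$ with $N=\epsilon^{-1/2}$, then up the segment from $(N,0)$ to $(N,1/N)$, on which $f=y^2(Ny-1)^2\le y^2\le\epsilon$, and then back along the hyperbola. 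So the sublevel-set comparison is only valid after truncating by a ball, and then the level set $M$ acquires boundary on the sphere, which breaks the closed-manifold Morse count. Handling this (Milnor converts the inequality $|x|^2\le R^2$ into an equation in $\rr^{n+1}$ via a slack variable; other treatments analyze the boundary critical points directly) is where the real work of the noncompact case lies; it is not a harmless normalization.

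Second, Theorem~\ref{Thm_Bezout} as stated requires the system to have \emph{finitely many solutions in $\cc^n$}. Making $\ell_a|_M$ Morse only guarantees finitely many \emph{real} critical points; the complex variety $\{f=\epsilon,\ \partial f/\partial x_1=\dots=\partial f/\partial x_{n-1}=0\}$ can a priori be positive-dimensional (and $\epsilon$, chosen by Sard as a regular value of $f$ on $\rr^n$, need not be a regular value of $f$ on $\cc^n$). You must either add genericity arguments over $\cc$ for both $\epsilon$ and $a$, or --- as Milnor does --- invoke the refined B\'ezout inequality bounding the number of \emph{nondegenerate} (isolated) solutions of a real system by the product of the degrees even when the full complex solution set is infinite. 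That refinement is not Theorem~\ref{Thm_Bezout} as the paper states it, so the appeal to it is not yet justified.
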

A proof of Theorem~\ref{BCSSbezout} can be found in e.g. Chapter~16 
of~\cite{BCSS98}.
In this paper we will use this result as well as 
 a variation for the case $n=p=2$
(see Lemma~\ref{Lem_realBezout} at the end of this section).
Lemma~\ref{Lem_decompose} below will be also useful
in Section~\ref{main}. 
We now give self-contained proofs of these two lemmas 
since they are quite short.

\begin{lemma}\label{Lem_decompose}
Let $g \in \rr[X,Y]$  be a non-zero polynomial of degree $d$.
The set of real zeros of $g$ is the union of a set of at most $d^2/4$ points and of the zero sets of polynomials $g_1,\ldots,g_k \in \rr[X,Y]$
which divide $g$ and are irreducible in $\cc[X,Y]$.
\end{lemma}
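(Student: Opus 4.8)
The plan is to work with the factorization of $g$ into irreducibles over $\cc[X,Y]$ and to separate the factors that are ``essentially real'' from those that are not. Write $g = c\,p_1^{e_1}\cdots p_r^{e_r}$ with $c\in\cc^{*}$, each $e_i\ge 1$, and $p_1,\dots,p_r$ pairwise non-proportional and irreducible in $\cc[X,Y]$. A real point is a zero of $g$ precisely when it is a zero of some $p_i$, so the real zero set of $g$ is the union of the sets $Z_i=\{(x,y)\in\rr^2 : p_i(x,y)=0\}$. There are two kinds of factor to consider: those $p_i$ that equal $\lambda_i\hat g_i$ for some $\lambda_i\in\cc^{*}$ and some $\hat g_i\in\rr[X,Y]$, and those for which $\overline{p_i}$ (the polynomial obtained by conjugating the coefficients of $p_i$) is not proportional to $p_i$.

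For a factor of the first kind, $\hat g_i$ is irreducible in $\cc[X,Y]$ and $Z_i$ is exactly its real zero set; it remains only to check that $\hat g_i$ divides $g$ in $\rr[X,Y]$ and not merely in $\cc[X,Y]$. This I would get from the elementary observation that if $h\in\rr[X,Y]$ divides $g\in\rr[X,Y]$ in $\cc[X,Y]$, then the quotient $g/h\in\cc[X,Y]$ equals its own coefficient-conjugate, hence lies in $\rr[X,Y]$. The polynomials $\hat g_i$ produced this way will be the $g_1,\dots,g_k$ of the statement.

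For a factor $p_i$ of the second kind, $\overline{p_i}$ is again one of the $p_1,\dots,p_r$ because $g$ has real coefficients, so the second-kind factors come in conjugate pairs $\{p_i,\overline{p_i}\}$ of equal degree; moreover a second-kind factor and its conjugate partner have the same real zero set, so each such pair contributes a single set $Z_i$. For a real point $(x,y)$ one has $\overline{p_i(x,y)}=\overline{p_i}(x,y)$, hence $p_i(x,y)=0$ if and only if $p_i(x,y)=\overline{p_i}(x,y)=0$; thus $Z_i$ equals the set of common real zeros of $p_i$ and $\overline{p_i}$. Since $p_i$ and $\overline{p_i}$ are coprime in $\cc[X,Y]$ their common zero set is finite, so Theorem~\ref{Thm_Bezout} over $\cc$ bounds its cardinality by $\deg(p_i)\deg(\overline{p_i})=(\deg p_i)^2$.

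It then remains to add up. Let $\delta_1,\dots,\delta_m$ be the common degrees of the conjugate pairs of second-kind factors. The second-kind factors contribute at most $\sum_j\delta_j^{2}$ points to the real zero set of $g$, while these pairs contribute at least $\sum_j 2\delta_j$ to $\sum_i e_i\deg p_i=d$, so $\sum_j\delta_j\le d/2$ and hence $\sum_j\delta_j^{2}\le\big(\sum_j\delta_j\big)^{2}\le d^{2}/4$. Putting the two kinds together yields the claimed decomposition. The underlying idea is short; the points that need care are the descent of divisibility from $\cc[X,Y]$ to $\rr[X,Y]$, the finiteness of the common zero set of $p_i$ and $\overline{p_i}$ (needed because Theorem~\ref{Thm_Bezout} is stated only for systems with finitely many solutions), and --- to obtain the sharp constant $d^{2}/4$ rather than $d^{2}$ --- the bookkeeping that groups each second-kind factor with its conjugate, since both members carve out the same real set while together using up twice their degree out of $d$.
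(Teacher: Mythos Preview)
Your proof is correct and follows essentially the same approach as the paper: factor over $\cc[X,Y]$, separate the factors that are scalar multiples of real polynomials from the genuinely complex conjugate pairs, and bound the real zeros contributed by each conjugate pair via B\'ezout. The only noteworthy difference is that where the paper splits each non-real factor $h_j$ into its real and imaginary parts $p_j,q_j$ and then spends a paragraph arguing that $p_j$ and $q_j$ share no common factor, you instead apply B\'ezout directly to the pair $(p_i,\overline{p_i})$, whose coprimality is immediate from irreducibility and non-proportionality; this is a mild streamlining of the same argument, yielding the same $(\deg p_i)^2$ bound per pair and the same $d^2/4$ total.
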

\begin{proof}
Let us factor $g$ as a product of irreducible polynomials in $\cc[X,Y]$.
We have 
$$g=\lambda g_1^{\alpha_1} \cdots g_k^{\alpha_k} h_1^{\beta_1} \cdots h_l^{\beta_l}  \overline{ h_1}^{\beta_1} \cdots \overline{h_l}^{\beta_l}$$
where the $g_j$ are the factors in $\rr[X,Y]$, the polynomials
$h_j,\overline{h_j}$ are complex conjugate and $\lambda$ is  a real
constant. 
We can assume that none of the $h_j$ is of the form
$h_j = \mu_j r_j$ where $\mu_j \in \cc$ and $r_j \in \rr[X,Y]$:
otherwise, we can replace the pair $(h_j,\overline{h_j})$ 
by $r_j^2$ and the constant $\mu_j \overline{\mu_j}$ can be absorbed by
$\lambda$.

The above assumption implies that the $h_j$ (and their conjugates) 
have finitely many real zeros. 
Indeed,  let $p_j, q_j$ be the real and imaginary parts of $h_j$.
The real solutions of $h_j=0$ are the same as those of $p_j=q_j=0$. This system has finitely many complex solutions since $p_j$ and $q_j$ are nonzero and do not share a common factor.
Consider indeed a putative factor $f_j \in \cc[X,Y]$ dividing $p_j$ and $q_j$,
of degree $\deg(f_j) \geq 1$. Since $f_j$ divides $h_j$ 
and this polynomial is irreducible, we must have $\deg(f_j)=\deg(h_j)$.
As a result, $\deg(p_j)=\deg(q_j)=\deg(f_j)$ and the first two polynomials
are constant multiples of the third. 
We conclude that $p_j$ differs from $q_j$ only by a multiplicative 
constant, and this contradicts our assumption.

By B\'ezout's theorem, there are at most $\deg(h_j)^2$ complex solutions to $p_j=q_j=0$. This is also an upper bound on 
the number of real roots of the $h_j$. The $\overline{h_j}$ have the same real roots.
Altogether, the $h_j$ and $\overline{h_j}$ have at most $\sum_{j=1}^l \deg(h_j)^2 \leq (d/2)^2$ real roots.
\end{proof}
The point of this lemma is that since each $g_j$ is
irreducible, the set of its singular zeros (i.e.,  the set of complex
solutions of the system $g={\partial g / \partial x} = {\partial g} /
{\partial y} = 0$) is finite and small. We first consider the more
general case given by the system $g={\partial g / \partial x} = 0$.
\begin{lemma}
If $\ g \in \cc[X,Y]$ is an irreducible polynomial of degree $d
\geq 1$, then either $g(X,Y)$ is of the form $aY+b$ or the number of
zeros in $\cc^2$ of $g={\partial g / \partial x} = 0$ is at most $d(d-1)$.
\end{lemma}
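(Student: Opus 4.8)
The plan is to apply B\'ezout's theorem (Theorem~\ref{Thm_Bezout}) to the system $g = \partial g/\partial x = 0$ in $\cc^2$, whose two polynomials have degrees $d$ and at most $d-1$, so that the product of degrees is at most $d(d-1)$. The only thing one must verify before invoking B\'ezout is that the system has \emph{finitely many} solutions, and the point of the proof is that the exceptional form $g=aY+b$ is precisely the case where it does not.

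First I would record that $\partial g/\partial x$ has degree at most $d-1$ and split into two cases. If $\partial g/\partial x \equiv 0$, then $g$ does not involve $X$, i.e.\ $g\in\cc[Y]$; since $\cc$ is algebraically closed, $g$ factors into linear factors in $\cc[Y]\subseteq\cc[X,Y]$, and irreducibility forces $\deg_Y g = 1$, so $g = aY+b$ with $a\neq0$. This is the excluded form, so there is nothing more to prove in this case. If instead $\partial g/\partial x\not\equiv0$, I would argue that $g$ and $\partial g/\partial x$ are coprime in $\cc[X,Y]$: since $g$ is irreducible, any common divisor is, up to a nonzero constant, either $1$ or $g$ itself, and $g\mid\partial g/\partial x$ is impossible because $\partial g/\partial x$ is nonzero of strictly smaller degree than $g$. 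Hence the two plane curves defined by $g$ and $\partial g/\partial x$ share no common component, so they meet in finitely many points (for instance because $\mathrm{Res}_Y(g,\partial g/\partial x)$ is a nonzero polynomial in $X$, bounding the $X$-coordinates of common zeros to finitely many values, and symmetrically for $Y$; this is in any case part of the standard content of B\'ezout's theorem). Theorem~\ref{Thm_Bezout} with $n=2$, $d_1=d$, $d_2=\deg(\partial g/\partial x)\le d-1$ then gives at most $d(d-1)$ zeros.

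The only delicate point, and the one I would be careful about, is matching the exceptional case of the statement exactly to the case in which B\'ezout is inapplicable: the irreducibility of $g$ is what forces a common component of $g$ and $\partial g/\partial x$ to be all of $g$, which forces $\partial g/\partial x\equiv 0$, which over $\cc$ forces $g=aY+b$. Everything else is a direct application of the already-stated B\'ezout bound, so there is no real computational obstacle here.
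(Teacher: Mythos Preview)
Your proof is correct and is essentially the paper's own argument: both use irreducibility of $g$ to force $g$ and $\partial g/\partial x$ to be coprime unless $\partial g/\partial x\equiv 0$, identify the latter case with $g=aY+b$, and apply B\'ezout in the former. The only cosmetic difference is that the paper organizes the case split as ``finitely many solutions vs.\ infinitely many'' rather than ``$\partial g/\partial x\not\equiv 0$ vs.\ $\equiv 0$'', but the logic is the same.
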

\begin{proof}
  We consider two cases.
  \begin{itemize}
  \item[(i)] If the system $g=\partial g / \partial x = 0$ has finitely many solutions, it has at most $d(d-1)$ solutions by B\'ezout's theorem. 
  \item[(ii)] If that system has infinitely many solutions, $\partial g / \partial x$ must vanish everywhere on the zero set of $g$ since $g$ is irreducible. For the same reason, it then follows that $g$ divides $\partial g / \partial x$. This is impossible by degree considerations unless $\partial g / \partial x \equiv 0$ on $\cc^2$. Hence $g$ depends only on the variable $Y$, and must be of the form $g(X,Y)=aY+b$ (by irreducibility again). 
  \end{itemize}
\end{proof}
As the additional condition ${\partial g/\partial y}=0$ implies $a=0$
in the previous lemma, we have:
\begin{corollary}\label{Cor_realBezout}
  If $\ g \in \cc[X,Y]$ is an irreducible polynomial of degree $d \geq 1$, it has at most $d(d-1)$ singular zeros in $\cc^2$.
\end{corollary}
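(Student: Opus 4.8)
The plan is to deduce the statement immediately from the previous lemma. A singular zero of $g$ is by definition a point of $\cc^2$ at which $g$, ${\partial g/\partial x}$ and ${\partial g/\partial y}$ all vanish. In particular every singular zero lies in the zero set of the smaller system $g = {\partial g/\partial x} = 0$, since dropping a defining equation can only enlarge the solution set. So I would simply invoke the previous lemma and split according to its two cases.

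If $g$ is not of the form $aY+b$, the lemma asserts that the system $g = {\partial g/\partial x} = 0$ has at most $d(d-1)$ solutions in $\cc^2$. As the singular zeros form a subset of those solutions, there are at most $d(d-1)$ of them, which is exactly the claimed bound.

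The only point requiring a word is the exceptional branch $g(X,Y) = aY+b$ of the lemma. Here $\deg g = d = 1$, so the claimed bound is $d(d-1)=0$, and indeed there is no singular zero at all: a singular zero would force ${\partial g/\partial y} = a = 0$, hence $g = b$ constant, contradicting $d \geq 1$. So the bound holds vacuously in this case too, and the argument is complete.

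I do not expect any genuine obstacle: all the content sits in the preceding lemma. The only thing to be careful about is phrasing the reduction correctly — observing that it is legitimate to discard the equation ${\partial g/\partial y}=0$ when passing from the set of singular zeros to the zero set of $g = {\partial g/\partial x} = 0$, and then reintroducing precisely that equation in the degenerate branch in order to rule out $a \neq 0$.
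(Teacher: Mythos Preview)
Your proposal is correct and follows essentially the same approach as the paper: invoke the preceding lemma, note that singular zeros are contained in the zero set of $g=\partial g/\partial x=0$, and in the exceptional case $g=aY+b$ use the extra condition $\partial g/\partial y=0$ to force $a=0$, a contradiction. The paper compresses this into a single sentence, but the content is identical.
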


We are now going to bound the number of roots of a real system of two dense
equations.
\begin{lemma}\label{Lem_realBezout}
  Let $f, g \in \rr[X,Y]$  be two non-zero polynomials of respective degrees 
$\delta$ and $d$. Let $\mathcal{U}$ be an open subset of $\rr^2$. 
Consider the system of polynomial equations:
  \begin{align}\label{Eq_realBezout}
    \begin{cases}
      f(X,Y)=0 \\ g(X,Y)=0.
    \end{cases}
  \end{align} 
If the number of solutions in $\mathcal{U}$ is finite, 
it is bounded by
  $d^2/4+d\delta$.

Moreover, if $f$ is the zero polynomial, the number of solutions in 
$\mathcal{U}$
 of the same system  is infinite or bounded by $d^2$.
\end{lemma}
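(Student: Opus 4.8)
The plan is to reduce the claim to the previously established lemmas by factoring $g$ via Lemma~\ref{Lem_decompose}. Write the real zero set of $g$ as the union of at most $d^2/4$ isolated points together with the zero sets $Z(g_1),\ldots,Z(g_k)$, where each $g_j$ is irreducible in $\cc[X,Y]$, divides $g$, and has degree $d_j$ with $\sum_j d_j \le d$. The isolated points contribute at most $d^2/4$ solutions (whether or not they lie in $\mathcal{U}$), which accounts for the first term of the bound. It then suffices to bound, for each $j$, the number of solutions of the system $f = g_j = 0$ lying in $\mathcal{U}$, under the assumption that the total number of solutions of $f = g = 0$ in $\mathcal{U}$ is finite.

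For a fixed irreducible $g_j$, I would argue as follows. Since the overall solution set in $\mathcal{U}$ is finite, $f$ does not vanish identically on the (irreducible) curve $Z(g_j)$; equivalently $g_j \nmid f$ in $\cc[X,Y]$ (here one uses that $g_j$ is irreducible, so if $f$ vanished on infinitely many points of $Z(g_j)$ it would vanish on all of it, forcing $g_j \mid f$, and then $Z(g_j) \cap \mathcal{U}$ would be infinite unless $Z(g_j) \cap \mathcal{U} = \emptyset$ — a case we may discard). Hence $f$ and $g_j$ share no common factor, so by B\'ezout's theorem (Theorem~\ref{Thm_Bezout}, applied over $\cc$) the system $f = g_j = 0$ has at most $d_j \delta$ solutions in $\cc^2$, and a fortiori at most $d_j\delta$ solutions in $\mathcal{U}$. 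Summing over $j$ gives at most $\sum_j d_j \delta \le d\delta$ solutions on the curve part, and adding the $d^2/4$ from the isolated points yields the bound $d^2/4 + d\delta$.

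For the ``moreover'' statement, assume $f \equiv 0$, so the system reduces to $g(X,Y) = 0$ with solution set the full real zero set of $g$ intersected with $\mathcal{U}$. Apply Lemma~\ref{Lem_decompose} again: if some $g_j$ has degree $\ge 1$, then $Z(g_j)$ is an infinite real curve (a real irreducible curve of positive degree either is a point-like component with no real points, or carries infinitely many real points) — more carefully, if $Z(g_j) \cap \mathcal{U}$ is infinite we are done, and if it is finite for every $j$ then we are in the finite case and must bound it. In the finite case, each $Z(g_j)$ is a real algebraic set defined by a single equation of degree $d_j$; its number of isolated points is at most $d_j^2/4$ by Lemma~\ref{Lem_decompose} applied to $g_j$ (noting a real irreducible curve with only finitely many real points has at most $d_j^2/4$ of them), and together with the $\le d^2/4$ isolated points of $g$ this totals at most $\sum_j d_j^2/4 + d^2/4 \le d^2/4 + d^2/4$; a slightly more careful accounting — observing that when all components are finite the whole of $Z(g) \cap \mathbb{R}^2$ is finite and is cut out by $g = \partial g/\partial x = 0$ or handled directly — gives the stated bound $d^2$.

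\medskip

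The main obstacle I anticipate is the bookkeeping in the ``moreover'' case: one must cleanly dispose of the dichotomy ``$Z(g_j)$ meets $\mathcal{U}$ in infinitely many points'' versus ``only finitely many'', and in the latter case get a sharp enough count on the real points of each positive-degree irreducible real curve that happens to have finite real locus. Getting the constant down to exactly $d^2$ (rather than, say, $d^2/2$ or $d^2 + d$) requires being a little careful about how the degrees $d_j$ of the irreducible factors add up and how isolated points are double-counted; the ``finite'' hypothesis is what makes the B\'ezout bound available in the first part, and it is essential that dropping it (the $f \equiv 0$ case) genuinely allows infinitely many solutions, which is why that case is stated as a separate alternative.
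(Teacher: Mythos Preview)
Your overall strategy (decompose $g$ via Lemma~\ref{Lem_decompose}, bound the isolated points by $d^2/4$, then handle each irreducible factor $g_j$ separately and sum) is exactly the paper's approach. The gap is in the case analysis for each $g_j$.

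You argue that if the solution set in $\mathcal{U}$ is finite then necessarily $g_j \nmid f$, because otherwise ``$Z(g_j)\cap\mathcal{U}$ would be infinite unless $Z(g_j)\cap\mathcal{U}=\emptyset$''. This dichotomy is false: a polynomial $g_j\in\rr[X,Y]$ that is irreducible over $\cc$ can have a nonempty \emph{finite} real locus. For example, $g_j(X,Y)=X^2+Y^4+Y^2$ is irreducible in $\cc[X,Y]$ (it is a degree-two polynomial in $X$ whose constant term $Y^2(Y^2+1)$ is not minus a square in $\cc[Y]$), and its only real zero is $(0,0)$. So the case ``$g_j\mid f$ and $Z(g_j)\cap\mathcal{U}$ is finite and nonempty'' genuinely occurs and must be counted.

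The paper's fix, which you are missing, is the observation that any \emph{isolated} real zero of $g_j$ in the open set $\mathcal{U}$ is a local extremum of $g_j$, hence a singular zero of $g_j$ (both partials vanish there). Corollary~\ref{Cor_realBezout} then bounds the number of such points by $d_j(d_j-1)$. In the first part of the lemma, when $g_j\mid f$ one has $d_j\le\delta$, so $d_j(d_j-1)\le d_j\delta$, and the sum over $j$ still gives $d\delta$. This is also precisely what makes the ``moreover'' part work: if $f\equiv 0$ and the real zero set of $g$ in $\mathcal{U}$ is finite, then for each $g_j$ the isolated real zeros are singular, contributing at most $d_j(d_j-1)$; combining with the $\sum_j \deg(h_j)^2$ from the complex-conjugate factors in the proof of Lemma~\ref{Lem_decompose} gives $\sum_j\deg(h_j)^2+\sum_j d_j^2\le d^2$. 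Your attempted recursion (applying Lemma~\ref{Lem_decompose} again to each $g_j$) is a dead end, since $g_j$ is already irreducible over $\cc$ and the lemma yields nothing new.
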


\begin{proof}
  Let us suppose that the system has finitely many solutions in
  $\mathcal{U}$.  By Lemma~\ref{Lem_decompose}, the set of roots of
  $g$ is the union of a set of size at most $d^2/4$ and of the sets of
  roots of the real polynomials $g_1,\ldots,g_k$.  Hence the number of
  solutions of System~(\ref{Eq_realBezout}) is bounded by $d^2/4$ plus
  the sum of the numbers of solutions of each system
  $g_i(X,Y)=f(X,Y)=0$.  Let us define $d_i$ as the degree of $g_i$.
  For each $i$, there are two cases:
  \begin{itemize}
  \item[(i)] if $g_i$ divides $f$ then either the number of real roots
    of $g_i$ is infinite on $\mathcal{U}$ and then all these roots are
    solutions of System~(\ref{Eq_realBezout}) or the number of its
    roots is finite and in this case, each of these roots is a
    singular zero of $g_i$ (if a point is an isolated zero of a
    continuous function on an open set of the plane, it needs to be an
    extremum of the function on this set). 
Hence by Corollary \ref{Cor_realBezout},
    the number of real roots is bounded by $d_i(d_i-1)$, and so, since
    $g_i$ divides $f$, by $d_i\delta$ if $f$ is not zero.
  \item[(ii)] otherwise, $g_i$ does not divide $f$ and thus the system has
    a finite number of solutions in $\mathbb{C}^2$ and this number is
    bounded by $d_i\delta$ according to Bézout's theorem.
  \end{itemize}
  Thus for each $i$, the number of solutions of the system
  $g_i(X,Y)=f(X,Y)=0$ is at most $d_i\delta$.

  In the case where $f$ is the zero polynomial, we can argue as in
  case (i). We saw in the proof of Lemma~\ref{Lem_decompose} that the
zero set of $g$ is the union of the zero sets of the $g_i$ and of the
$h_i$, and that altogether the $h_i$ have at most $\sum_i \deg(h_i)^2$
zeros. Moreover, as in case (i), the number of zeros of $g_i$ on $\mathcal{U}$
is infinite or bounded by $d_i(d_i-1)$.
We conclude that the number of zeros of $g$ on $\mathcal{U}$ is
infinite or bounded by 
$$\sum_i \deg(h_i)^2 + \sum_i d_i(d_i-1) \leq \sum_i \deg(h_i)^2 +
\sum_i d_i^2 \leq d^2.$$
\end{proof}
Note that  the somewhat worse bound  $$\max(d,\delta).(2\max(d,\delta)-1)$$
follows directly from Theorem~\ref{BCSSbezout}.

\subsection{Cylindrical algebraic decomposition for one bivariate polynomial} \label{cad}

In his paper, Collins~\cite{Col75} introduced the cylindrical algebraic
decomposition. The purpose was to get an algorithmic proof of quantifier
elimination for real closed fields. More details on cylindrical
algebraic decomposition can be found in~\cite{BPR06}. Here, we will use a similar
decomposition of $\rr$ for separating the different behaviours of the
roots of our system. However, in our case the dimension is just two,
and we want to characterize only one polynomial, so we will use an easier
decomposition. Let us recall some definitions and properties 
from~\cite{Col75}.

\begin{definition}
  Let $A(X,Y)$ be a real polynomial, $S$ a subset of $\rr$.
  We will say that $f_1,\ldots,f_m$ with $m\geq 1$ delineate the roots
  of $A$ on $S$ in case the following conditions are all satisfied:
  \begin{enumerate}
  \item  $f_1,\ldots,f_m$ are distinct continuous functions from $S$ to $\cc$.
  \item For all $1\leq i\leq m$, there is a positive integer $e_i$ such
    that for all $a$ in $S$, $f_i(a)$ is a root
    of $A(a,Y)$ of multiplicity $e_i$.
  \item If $a\in S$, $b\in\cc$ and $A(a,b) = 0$ then for some $i$ with
    $1\leq i\leq m$, $b=f_i (a)$.
  \item For some $k$ with $0\leq k\leq m$, the functions
    $f_1,\ldots,f_k$ are real-valued with $f_1<f_2<\dotsb<f_k$ and the
    values of $f_{k+1},\ldots,f_m$ are all non-real.
  \end{enumerate}
  The value $e_i$ will be called the multiplicity of $f_i$. If $k\geq
  1$, we will say that $f_1,\ldots,f_k$ delineate the real roots of
  $A$ on $S$. The roots of $A$ are delineable on $S$ in case there are
  functions $f_1,\ldots,f_m$ which delineate the roots of $A$ on $S$.
\end{definition}

Collins proved the following theorem.
\begin{theorem}[Particular case of Theorem 1 in~\cite{Col75}]
  Let $A(X,Y)$ be a polynomial in $\rr[X,Y]$. Let $S$ be a connected
  subset of $\rr$. If the leading coefficient of $A$ viewed as a
  polynomial in $Y$ does not vanish on $S$, and if the number of
  distinct roots of $Y\mapsto A(x,Y)$ on $\cc$ is the same for all
  $x$ in $S$, then the
  roots of $A$ are delineable on $S$.
\end{theorem}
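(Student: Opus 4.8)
The plan is to prove the statement by a standard continuity-of-roots argument, using the connectedness of $S$ to turn a local construction into a global one. First I would reduce to the monic case: since the leading coefficient $\ell(x)$ of $A$, viewed as a polynomial in $Y$, does not vanish on $S$, replacing $A$ by $A/\ell$ yields a polynomial that is monic in $Y$, has exactly the same roots with the same multiplicities, and whose coefficients are rational (hence continuous) functions of $x$ on $S$. So I may assume $A(x,Y)=Y^n+a_{n-1}(x)Y^{n-1}+\cdots+a_0(x)$ with the $a_j$ continuous on $S$ and $n=\deg_Y A$.

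Next comes the local construction. Fix $x_0\in S$, let $r_1,\dots,r_m$ be the distinct roots of $A(x_0,\cdot)$ with multiplicities $e_1,\dots,e_m$ (so $\sum_i e_i=n$), and choose pairwise disjoint open disks $D_i\ni r_i$ whose boundaries avoid all the $r_j$. By continuity of the $a_j$ and Rouch\'e's theorem there is a neighborhood $U\subseteq S$ of $x_0$ such that for every $x\in U$ the polynomial $A(x,\cdot)$ has no root on $\bigcup_i\partial D_i$, has exactly $e_i$ roots (with multiplicity) inside $D_i$, and no root outside $\bigcup_i D_i$. In particular $A(x,\cdot)$ has at least $m$ distinct roots for $x\in U$; by the hypothesis that this number equals $m$ throughout $S$, the $e_i$ roots of $A(x,\cdot)$ lying in $D_i$ must coincide in a single point $f_i(x)$, a root of multiplicity exactly $e_i$. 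Moreover $f_i$ is continuous on $U$: the sum of the roots of $A(x,\cdot)$ in $D_i$ counted with multiplicity, namely $e_i f_i(x)$, is given by a Cauchy integral of $w\,A_Y(x,w)/A(x,w)$ over $\partial D_i$, and therefore depends continuously (indeed analytically) on $x$. Thus around every point of $S$ one obtains exactly $m$ continuous root functions, each of locally constant multiplicity.

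Then I would globalize. The key point is that the germ of such a branch is determined by its value at one point: if $g,h$ are continuous root-valued functions on a connected $V\subseteq S$ with $g(x_1)=h(x_1)$, then near any point where they agree the local analysis above forces them to agree (both must equal the unique local branch through the common value), so $\{x\in V: g(x)=h(x)\}$ is open, closed and nonempty, hence all of $V$. Consequently the $m$ branches defined near $x_0$ extend uniquely along $S$, and the same open--closed argument applied to $S$ itself (which is connected) shows they extend to $m$ globally defined continuous functions $f_1,\dots,f_m:S\to\cc$ such that every root of $A(x,\cdot)$ is some $f_i(x)$ of multiplicity $e_i$. These are pairwise distinct, since $f_i(x)=f_j(x)$ with $i\ne j$ would force fewer than $m$ distinct roots at $x$; this gives conditions (1)--(3). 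For condition (4): complex conjugation permutes root functions, so $i\mapsto\sigma(i)$ with $f_{\sigma(i)}=\overline{f_i}$ is a well-defined involution of $\{1,\dots,m\}$, and being locally constant on the connected set $S$ it is independent of $x$. Its fixed points are exactly the indices for which $f_i$ is real-valued; relabel these $1,\dots,k$. Within $f_1,\dots,f_k$ the ordering is preserved, since $f_i(x)<f_j(x)$ could fail only past a point where $f_i(x)=f_j(x)$, which is excluded; so reorder the real branches increasingly, finishing the proof.

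The step carrying the real content — and the one to be careful about — is the passage from continuity of the unordered multiset of roots to the existence of well-defined continuous branches: this is precisely where the hypothesis of a constant number of distinct complex roots is used, to prevent a cluster of colliding roots from breaking apart as $x$ varies. The Rouch\'e / argument-principle estimate on the small circles $\partial D_i$ is the clean way to package this, and the remainder is bookkeeping with open--closed subsets of the connected set $S$.
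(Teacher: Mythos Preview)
The paper does not give its own proof of this statement: it is quoted as a particular case of Theorem~1 in Collins~\cite{Col75} and used as a black box. So there is nothing in the paper to compare your argument against.

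That said, your proposal is the standard and correct route, essentially the one Collins takes. The reduction to the monic case is fine since the leading coefficient is a nonvanishing polynomial in $x$ on $S$; the local step via Rouch\'e/argument principle correctly isolates one root of multiplicity $e_i$ in each disk, using crucially that the number of \emph{distinct} roots is constant to rule out a cluster splitting; and the Cauchy-integral formula for $\sum w_j$ gives continuity of each branch. The globalization is the only place worth tightening: you should make explicit that a connected subset of $\rr$ is an interval, so there is no monodromy obstruction, and that closedness of the set where the extension exists follows by invoking the local branches at the limit point and matching them to the already-constructed $f_i$ on the overlap (your open--closed argument handles openness but the closedness direction deserves one more sentence). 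The treatment of condition~(4) via the conjugation involution and the impossibility of real branches crossing is clean and correct.
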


Criteria are given in the remainder of Collin's paper for
characterizing the invariance of the number of roots. He uses the
resultant of two polynomials.

Let $A$ and $B$ be polynomials in $\rr[X][Y]$ with $\deg_Y(A)=m$ and
$\deg_Y(B)=n$. The Sylvester matrix of $A$ and $B$ is the $m+n$ by
$m+n$ matrix $M$ whose successive rows contain the coefficients of the
polynomials $Y^{n-1}A(Y),\ldots,YA(Y),A(Y),Y^{m-1}B(Y),\ldots,B(Y)$,
with the coefficient of $Y^i$ occuring in column $m+n-i$. The
polynomial $\rm{Res}(A,B)$, the resultant of $A$ and $B$, is by
definition the determinant of $M$. If the leading coefficient of $A$
vanishes for a particular $x_0$, then the resultant $\rm{Res}(A,A_Y)$
also vanishes at $x_0$ (we recall that $A_Y$ is a shorthand for $\frac{\partial
  A}{\partial Y}$). We note, for subsequent application, that if $A\in
\rr[X,Y]$, $\deg_X(A)\leq d$ and $\deg_Y(A)\leq d$, then
$\rm{Res}(A,A_Y)$ is a polynomial in $\rr[X]$ of degree bounded by
$2d^2-d$.

An immediate corollary of Theorem 1, 2 and 3 in~\cite{Col75} is
\begin{corollary}\label{Cor_criterDelineable}
  Let $A(X,Y)$ be a polynomial in $\rr[X][Y]$. Let $S$ be a connected subset of
  $\rr$. If $\rm{Res} (A,A_Y)(X)$ does not have any
  roots on $S$, then the roots of $A$ are delineable on $S$.
\end{corollary}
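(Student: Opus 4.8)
The plan is to obtain Corollary~\ref{Cor_criterDelineable} as an immediate consequence of the delineability theorem of Collins quoted above (the ``Particular case of Theorem~1 in~\cite{Col75}''). That theorem has two hypotheses on the connected set $S$: (a) the leading coefficient of $A$, viewed as a polynomial in $Y$, does not vanish on $S$; and (b) the number of distinct complex roots of $Y\mapsto A(x,Y)$ is the same for every $x\in S$. So the entire task is to check that the single assumption ``$\mathrm{Res}(A,A_Y)$ has no root on $S$'' forces both (a) and (b); once this is done, Collins's theorem directly supplies functions $f_1,\dots,f_m$ that delineate the roots of $A$ on $S$. This is exactly why the statement is advertised as ``an immediate corollary of Theorems~1, 2 and~3 in~\cite{Col75}'': Theorems~2 and~3 are precisely the resultant criterion for invariance of the root count, and the reasoning below is its short, self-contained rephrasing.

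For (a) I would invoke the observation already recorded just above in the text, namely that if the leading coefficient $\mathrm{lc}_Y(A)$ vanishes at some $x_0$, then $\mathrm{Res}(A,A_Y)$ vanishes at $x_0$ as well. Its contrapositive gives $\mathrm{lc}_Y(A)(x)\neq 0$ for all $x\in S$, and in particular $\deg_Y A(x,Y)=m:=\deg_Y A$ is constant on $S$. We may assume $m\geq 1$; if $m=0$ then $A(x,Y)$ is, for each $x\in S$, a nonzero constant in $Y$ with no roots, and there is nothing to delineate.

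For (b) I would use the behaviour of the Sylvester resultant under the evaluation homomorphism $X\mapsto x$. Since $\mathrm{lc}_Y(A)(x)\neq 0$ on $S$, the formal $Y$-degree of $A$ is preserved by this evaluation, so only $A_Y$ can drop degree; by the standard specialization formula for resultants, $\mathrm{Res}(A,A_Y)(x)$ then equals $\mathrm{Res}\big(A(x,Y),A_Y(x,Y)\big)$ up to a factor which is a power of $\mathrm{lc}_Y(A)(x)$ and hence nonzero on $S$. Consequently $\mathrm{Res}(A,A_Y)(x)=0$ if and only if $A(x,Y)$ and $A_Y(x,Y)$ share a complex root, i.e.\ if and only if the degree-$m$ polynomial $A(x,Y)$ has a repeated root in $\cc$. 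As $\mathrm{Res}(A,A_Y)$ has no root on $S$, for every $x\in S$ the polynomial $A(x,Y)$ has degree exactly $m$ and is squarefree, hence has exactly $m$ distinct complex roots; this common value is independent of $x$, which is precisely (b).

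The one point I expect to require care is the bookkeeping in step (b): making precise that evaluating the resultant polynomial $\mathrm{Res}(A,A_Y)\in\rr[X]$ at $x$ agrees, up to a nonzero factor, with the resultant of the two evaluated univariate polynomials, even though $A_Y$ may lose $Y$-degree at $x$ while $A$ does not. This is a classical property of Sylvester resultants under ring homomorphisms; the crucial feature is that the correction factor is a power of $\mathrm{lc}_Y(A)(x)$, which is nonzero on $S$ by step~(a), so the vanishing locus of the resultant on $S$ is insensitive to the degree drop in $A_Y$. With (a) and (b) in hand on the connected set $S$, Collins's delineability theorem applies verbatim and the corollary follows.
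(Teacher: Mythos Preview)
Your proposal is correct and matches the paper's approach: the paper gives no detailed proof at all, merely recording the statement as ``an immediate corollary of Theorems~1, 2 and~3 in~\cite{Col75}'', and your argument is precisely a self-contained unpacking of that citation---using the leading-coefficient remark already in the text for hypothesis~(a), and the standard specialization behaviour of the Sylvester resultant for hypothesis~(b)---after which Collins's Theorem~1 applies directly.
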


Then, in the following, we will consider some subsets of $\rr$ where
the polynomial $\rm{Res} (A,A_Y)$ does not have roots. In particular,
we want this polynomial to be nonzero. We show that this is the case if $A$
is irreducible in $\rr[X,Y]$.

\begin{lemma}\label{Lem_Res}
  Let $A(X,Y)$ be an irreducible polynomial in $\rr[X][Y]$ with
  $\deg_Y(A)\geq 1$. Then
  $\rm{Res} (A,A_Y)$ is not the zero polynomial.
\end{lemma}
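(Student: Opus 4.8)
The plan is to show that the resultant $\mathrm{Res}(A,A_Y)$, viewed as a polynomial in $X$, cannot be identically zero when $A$ is irreducible in $\rr[X,Y]$ with $\deg_Y(A) \geq 1$. The key classical fact I would invoke is that for two polynomials $A, B \in K[X][Y]$ with $K$ a field and positive $Y$-degrees, $\mathrm{Res}_Y(A,B) = 0$ identically (as an element of $K[X]$) if and only if $A$ and $B$ have a common factor of positive degree in $Y$ in $K(X)[Y]$, equivalently (by Gauss's lemma) a common factor of positive $Y$-degree in $K[X,Y]$. Here I take $K = \rr$ and $B = A_Y$.

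So first I would reduce to the statement: $A$ and $A_Y$ share no common factor of positive $Y$-degree in $\rr[X,Y]$. Suppose for contradiction that $P \in \rr[X,Y]$ with $\deg_Y(P) \geq 1$ divides both $A$ and $A_Y$. Since $A$ is irreducible and $\deg_Y(P) \geq 1$ (so $P$ is not a constant and in fact $A = cP$ for a constant $c$, because $A/P$ would be a factor of $A$ of $Y$-degree $0$, hence a polynomial in $X$ dividing the irreducible $A$, hence a constant). Thus $A$ itself divides $A_Y$. But $\deg_Y(A_Y) = \deg_Y(A) - 1 < \deg_Y(A)$ (using $\deg_Y(A) \geq 1$), so the only way $A \mid A_Y$ is $A_Y \equiv 0$. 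That would force $A$ to be a polynomial in $X$ alone, contradicting $\deg_Y(A) \geq 1$. This contradiction establishes the claim.

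The one point requiring a little care — and the main (mild) obstacle — is the precise form of the resultant vanishing criterion over a polynomial ring rather than a field, and the handling of leading coefficients: the statement "$\mathrm{Res}_Y(A,B)=0 \iff \gcd$ nontrivial in $Y$" is cleanest when one works in $\rr(X)[Y]$, where $\rr(X)$ is a field, and then transfers back via Gauss's lemma / primitivity. A subtlety is that the resultant is defined via the Sylvester matrix whose size depends on the nominal $Y$-degrees of $A$ and $A_Y$; one should note that $\deg_Y(A_Y)$ behaves as expected precisely because the leading coefficient of $A$ in $Y$ is a nonzero polynomial in $X$ whose derivative-interaction does not drop degree unexpectedly — but since we only need $\mathrm{Res} \not\equiv 0$, and the common-factor criterion is an "if and only if" at the level of $\rr(X)[Y]$, these bookkeeping issues do not actually bite. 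I would therefore state the proof compactly:

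\begin{proof}
Suppose, for the sake of contradiction, that $\mathrm{Res}(A,A_Y)$ is the zero polynomial in $\rr[X]$. Viewing $A$ and $A_Y$ as elements of $\rr(X)[Y]$, a polynomial ring over the field $\rr(X)$, the vanishing of the resultant implies that $A$ and $A_Y$ have a common factor of positive degree in $Y$ over $\rr(X)$. By Gauss's lemma, $A$ and $A_Y$ then have a common factor $P \in \rr[X,Y]$ with $\deg_Y(P)\geq 1$. Since $P$ divides $A$ and $A$ is irreducible in $\rr[X,Y]$, the quotient $A/P$ is a factor of $A$ of $Y$-degree $0$, hence a divisor of $A$ lying in $\rr[X]$; by irreducibility it is a constant, so $A$ and $P$ are associates and thus $A$ itself divides $A_Y$. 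But $\deg_Y(A)\geq 1$, so $\deg_Y(A_Y)=\deg_Y(A)-1 < \deg_Y(A)$, and the only multiple of $A$ (in the variable $Y$) of smaller $Y$-degree is the zero polynomial. Hence $A_Y\equiv 0$, which means $A$ does not involve $Y$, contradicting $\deg_Y(A)\geq 1$. Therefore $\mathrm{Res}(A,A_Y)$ is not the zero polynomial.
\end{proof}
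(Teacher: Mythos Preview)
Your proof is correct and follows essentially the same route as the paper's: both argue that if $\mathrm{Res}(A,A_Y)\equiv 0$ then $A$ and $A_Y$ share a nontrivial common factor in $Y$, and then use irreducibility together with the degree drop $\deg_Y(A_Y)<\deg_Y(A)$ to reach a contradiction. The only cosmetic difference is that the paper invokes Gauss's lemma to pass to $\rr(X)[Y]$ and works there, whereas you invoke it in the other direction to pull the common factor back into $\rr[X,Y]$; the two arguments are equivalent.
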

\begin{proof}
By Gauss' Lemma, the irreducibility of $A$ in $\rr[X][Y]$ implies that
$A$ is also irreducible in $\rr(X)[Y]$.
Let us suppose that $R(X)=\rm{Res} (A,A_Y)(X) =0$. This implies
that $A$ and $A_Y$ have a common factor 
  $B \in \rr(X)[Y]$ of degree $\deg_Y(B)\geq 1$.
Since $A$ is irreducible in $\rr(X)[Y]$, there exists $C$  in $\rr(X)$
  such that $A=CB$. We thus have $\deg_Y(A)=\deg_Y(B)\leq \deg_Y(A_Y)$.
This is impossible since $\deg_Y(A)\geq 1$.
\end{proof}


\begin{remark}\label{Rk_A_Y}
  If $(x_0,y_0)$ is a root of $A$ and of $A_Y$ then 
$Y-y_0$ divides the polynomials
  $A(x_0,Y)$ and $A_Y(x_0,Y)$. Hence $\rm{Res}(A,A_Y)(x_0)=0$. Therefore,
  if $\rm{Res}(A,A_Y)$ has no zeros on a subset $S$ of $\rr$, 
  the system $A(X,Y)=A_Y(X,Y)=0$ does not have solutions on
  $S\times\rr$. This remark will be useful for the proof of
  Lemma~\ref{Lem_T_S} in the next section, and for an application
of the analytic implicit function theorem before Lemma~\ref{rootsonlines}.
\end{remark}

\section{Intersecting  a sparse curve  with  a  low-degree curve} \label{main}

Recall that a polynomial is said to be $t$-sparse if it has at most
$t$ monomials.
In this section we prove our main result.
\begin{theorem}\label{Cor_genSystem}
  Let $F \in \rr[X,Y]$ be a nonzero bivariate polynomial of degree $d$ and let $G \in \rr[X,Y]$ be a
  bivariate $t$-sparse polynomial.
The set of real solutions of the system
  \begin{align}\label{Eq_system}
    \begin{cases} F(X,Y)=0 \\ G(X,Y)=0 \end{cases} 
  \end{align} 
has a number of connected components which is $O(d^3t+d^2t^3)$.
\end{theorem}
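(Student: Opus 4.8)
The strategy is to reduce the bivariate system to a collection of univariate "sparse-like" problems via cylindrical algebraic decomposition, and then apply the Wronskian bound of Theorem~\ref{thm_Wronskien}. First I would factor $F$ into irreducible factors in $\cc[X,Y]$; by Lemma~\ref{Lem_decompose} the real zero set of $F$ is a union of finitely many points (at most $d^2/4$ of them) together with the real zero sets of the real irreducible factors $F_1,\ldots,F_k$, with $\sum_i \deg(F_i) \le d$. The isolated points contribute $O(d^2)$ components, so it suffices to bound the components of $\{F_i = G = 0\}$ for each $i$. Fix such an $F_i$ of degree $d_i$; if $\deg_Y(F_i)=0$ then $F_i$ depends only on $X$ and the system has at most $d_i$ vertical lines to intersect with the sparse curve $G=0$ — on each such line $G$ becomes a univariate $t$-sparse polynomial, contributing $O(d_i t)$ components by Descartes. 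So assume $\deg_Y(F_i)\ge 1$.

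Next I would use Corollary~\ref{Cor_criterDelineable}: the polynomial $R(X) = \mathrm{Res}(F_i,(F_i)_Y)(X)$ is nonzero by Lemma~\ref{Lem_Res}, has degree $O(d_i^2)$, and on each of the $O(d_i^2)$ open intervals of $\rr \setminus \{R=0\}$ the roots of $F_i$ are delineable by continuous functions $\phi_1 < \cdots < \phi_m$, each analytic (by Remark~\ref{Rk_A_Y}, $(F_i)_Y$ does not vanish along any of these branches, so the analytic implicit function theorem applies), with $m \le d_i$. On the boundary points of these intervals (the roots of $R$, plus the intersections with vertical lines through them) I pick up $O(d_i^2)$ more points; each contributes $O(1)$ components, and the vertical-line intersections with $G=0$ contribute $O(d_i^2 t)$ in total. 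On each interval $I$ and each branch $\phi = \phi_\ell$, the real solutions of the system with $Y = \phi(X)$ are the real roots of $g(X) := G(X,\phi(X)) = \sum_{j=1}^t c_j X^{a_j} \phi(X)^{b_j}$, and I must bound $Z(g)$. If the $f_j(X) = c_j X^{a_j}\phi(X)^{b_j}$ are linearly dependent on $I$, restrict to a maximal independent subfamily (of size $t' \le t$); otherwise apply Theorem~\ref{thm_Wronskien} directly. The Wronskian $W_j$ is a determinant of derivatives $f_1^{(r)},\ldots,f_j^{(r)}$, $0\le r\le j-1$; using Lemma~\ref{lem_power} to expand $(\phi^{b})^{(r)}$ and Lemma~\ref{Lem_higherDerivative} to write each $\phi^{(k)}$ as $S_k/((F_i)_Y)^{2k-1}$ with the numerator of degree $O(k d_i)$ in $(X,\phi(X))$, one sees that $W_j \cdot ((F_i)_Y)^{N_j}$ is a polynomial expression $P_j(X,\phi(X))$ for a suitable exponent $N_j = O(j^2)$, with $\deg_{(X,\phi)} P_j = O(j^2 d_i + jt)$ or so (the $X$-monomial degrees $a_j$ drop out of the count because factoring $X^{\min a_j}$ out of each column contributes only $O(t)$ to the overall degree and an identically-nonzero factor). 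Since $(F_i)_Y \ne 0$ on the branch, $Z(W_j)$ on $I$ is at most the number of real roots of $X \mapsto P_j(X,\phi(X))$, which by Lemma~\ref{Lem_realBezout} applied to the system $F_i(X,Y) = P_j(X,Y) = 0$ (using that $F_i$ is irreducible and $P_j$ is not identically zero on the branch, hence $F_i \nmid P_j$) is $O(d_i \cdot \deg P_j) = O(j^2 d_i^2 + j d_i t)$. Summing the Wronskian contributions as in Theorem~\ref{thm_Wronskien}, $Z(g) \le t - 1 + O(\sum_{j=1}^t (j^2 d_i^2 + j d_i t)) = O(t^3 d_i^2 + t^3 d_i) = O(d_i^2 t^3)$ per branch; over $m \le d_i$ branches and $O(d_i^2)$ intervals this is $O(d_i^5 t^3)$, which is too weak — so the delicate point is to be more careful with the degree bookkeeping. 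The correct accounting tracks the degree of $W_j$ in $\phi$ separately from its degree in $X$: the $X$-degree is $O(d_i)$ per branch-derivative column after the $\mathrm{Res}$ normalization (not $O(jd_i)$), so $\deg P_j = O(j d_i + jt)$ and hence $\sum_j Z(W_j) = O(d_i \sum_j (jd_i + jt)) = O(d_i^3 t^2 + d_i^2 t^3)$ per interval; but then one must avoid the extra $d_i^2$ from summing over intervals by instead bounding, across all intervals at once, the total number of roots of $\mathrm{Res}(F_i, \text{numerator})$, which is a single polynomial of controlled degree.

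Finally I would assemble the pieces: summing $O(d_i^3 t + d_i^2 t^3)$ over the factors $F_i$ and using $\sum_i d_i \le d$ together with convexity ($\sum d_i^3 \le d^3$, $\sum d_i^2 \le d^2$) gives the claimed $O(d^3 t + d^2 t^3)$ for the finite case; the infinite case is handled by the same delineation argument, since on each interval and branch the connected components of $\{F_i = G = 0\}$ are either isolated points of $g$ or maximal intervals where $g \equiv 0$, and the latter number is bounded the same way (an interval where $g$ vanishes identically forces, at its endpoints inside $I$, either a root of $g$ or an endpoint of $I$). The main obstacle, as indicated above, is the degree bookkeeping in the Wronskians: getting a bound polynomial in $d$ rather than in a large power of $d$ requires carefully separating the contribution of the sparse exponents $a_j, b_j$ (which is where the $t^3$ comes from) from the contribution of the algebraic function $\phi$ and its derivatives (which is where the $d$ factors come from), and controlling the number of roots of the resultant-normalized Wronskian across all delineability intervals simultaneously.
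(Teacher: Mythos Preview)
Your plan is the paper's: factor $F$ via Lemma~\ref{Lem_decompose}, reduce to an irreducible factor $F_i$, cut $\rr$ into $O(d_i^2)$ intervals using the roots of $\mathrm{Res}(F_i,(F_i)_Y)$, delineate the branches $\phi$ by Corollary~\ref{Cor_criterDelineable}, and on each branch bound the zeros of $G(x,\phi(x))$ via Theorem~\ref{thm_Wronskien} after writing the Wronskians as rational functions in $(x,\phi(x))$ using Lemmas~\ref{lem_power} and~\ref{Lem_higherDerivative}. You also correctly isolate the crucial trick: the zeros of the Wronskian numerator must be counted \emph{globally} over all intervals and branches (as solutions of a single bivariate system $T_s=F_i=0$, or equivalently via a single resultant), not per interval.

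Two points where your bookkeeping goes wrong. First, your ``corrected'' degree $\deg P_j = O(jd_i + jt)$ is not right: the sparse exponents $a_k,b_k$ factor out entirely (each column contributes $x^{a_k}\phi^{b_k}$), so no $jt$ term appears; and what remains has degree $O(d_i j^2)$, not $O(d_i j)$, since the $p$-th row contributes degree $O(d_i p)$ and these sum to $O(d_i\binom{j}{2})$. This is the content of the paper's Lemma~\ref{Lem_T_S}: the Wronskian equals $x^{\alpha-\binom{s}{2}}\phi^{\beta-\binom{s}{2}}T_s(x,\phi)/F_Y^{s(s-1)}$ with $\deg T_s \le (1+2d_i)\binom{s}{2}$. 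With this correct degree, the global count of solutions of $T_s=F_i=0$ is $O(d_i^2 s^2)$ by Lemma~\ref{Lem_realBezout}, so $\sum_{s\le t}$ gives the $O(d_i^2 t^3)$ term; the $O(d_i^3 t)$ term comes from the additive $(t-1)$ in Theorem~\ref{thm_Wronskien} multiplied by $O(d_i^2)$ intervals and $\le d_i$ branches.

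Second, your infinite-case argument via ``maximal intervals where $g\equiv 0$'' is unnecessarily delicate. Since $\phi$ is analytic on $I$, if $g$ vanishes on a subinterval it vanishes on all of $I$; then $F_i$ and $G$ share infinitely many zeros, so by irreducibility $F_i \mid G$ and the solution set equals $\{F_i=0\}$, which has at most $d_i(2d_i-1)$ components by Theorem~\ref{BCSSbezout}. This is the paper's Corollary~\ref{Cor_irredSystem}, and it lets you assume finiteness from the start for each irreducible factor. (Relatedly, the paper handles linear dependence of the $f_j$ by the same mechanism: a dependence means some $H=\sum c_j X^{a_j}Y^{b_j}$ is a multiple of $F_i$, so one replaces $G$ by $G-(a_u/c_u)H$ and inducts on $t$; your ``restrict to a maximal independent subfamily'' needs to be read as ``rewrite $g$ in that basis'', not literally as a subsum.)
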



We will proceed by reduction to the case  where $F$ is irreducible and the
system has finitely many solutions:

\begin{proposition}\label{Thm_irredSystem}
Consider again a nonzero bivariate polynomial $F$ of degree $d$ and 
 a bivariate $t$-sparse polynomial $G$. Assume moreover that $F$ 
is irreducible in $\cc[X,Y]$  and that~(\ref{Eq_system}) 
has finitely many real solutions. 
Then this  system has $O(d^3t+d^2t^3)$ distinct real solutions.
\end{proposition}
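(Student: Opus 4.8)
The plan is to reduce the bivariate system to a univariate root-counting problem to which Theorem~\ref{thm_Wronskien} applies, using the cylindrical algebraic decomposition of Section~\ref{cad} to make the implicit function $\phi$ (with $F(x,\phi(x))=0$) globally well-defined on finitely many intervals. Concretely, since $F$ is irreducible in $\cc[X,Y]$ and the system has finitely many solutions, $F$ cannot be of the form $aY+b$ with $b$ non-constant (that case is handled by the $d=1$ results of~\cite{KPT13}); up to swapping $X$ and $Y$ we may assume $\deg_Y(F)\geq 1$, so by Lemma~\ref{Lem_Res} the polynomial $R(X)=\mathrm{Res}(F,F_Y)(X)$ is a nonzero polynomial of degree at most $2d^2-d$. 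I would first remove from $\rr$ the (at most $2d^2-d$) real roots of $R$ and also the real roots of the leading coefficient of $F$ viewed as a polynomial in $Y$ (at most $d$ of them); this splits $\rr$ into $O(d^2)$ open intervals $I_1,\ldots,I_m$. On each such interval, by Corollary~\ref{Cor_criterDelineable} the roots of $F$ are delineable, so the real roots are given by continuous analytic functions $\phi_{j,1}<\cdots<\phi_{j,k_j}$, and $k_j\leq d$. By Remark~\ref{Rk_A_Y}, on each $I_j$ we have $F_Y(x,\phi_{j,\ell}(x))\neq 0$, so each $\phi_{j,\ell}$ is analytic and Lemma~\ref{Lem_higherDerivative} applies.

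Next, on a fixed interval $I_j$ with a fixed branch $\phi=\phi_{j,\ell}$, the solutions of the system lying on the graph of $\phi$ are the roots of the univariate function $H(x)=G(x,\phi(x))$. Writing $G=\sum_{i=1}^t c_i X^{\alpha_i}Y^{\beta_i}$, this is $H(x)=\sum_{i=1}^t c_i x^{\alpha_i}\phi(x)^{\beta_i}$, a sum of $t$ analytic "basis functions" $f_i(x)=c_i x^{\alpha_i}\phi(x)^{\beta_i}$ on $I_j$ (shrinking $I_j$ further to avoid $x=0$ if $\phi$ is not known nonnegative and some $\beta_i$ is non-integral — but here $\beta_i\in\nn$, so this is not an issue). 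If the $f_i$ are linearly independent on $I_j$ we apply Theorem~\ref{thm_Wronskien}; if not, $H$ lies in the span of a linearly independent subfamily of size $t'<t$ and the same bound holds a fortiori with $t'$ in place of $t$. So I need to bound $Z(W_p)$ for each Wronskian $W_p=W(f_1,\ldots,f_p)$, $p\leq t$. Using Lemma~\ref{lem_power} to expand $(\phi^{\beta_i})^{(q)}$ and then Lemma~\ref{Lem_higherDerivative} to write $\phi^{(k)}$ as $S_k(\ldots)/F_Y^{2k-1}$ with numerator of degree $O(kd)$ in $(x,\phi(x))$, each entry of the Wronskian matrix is (after clearing denominators, i.e., multiplying row $q+1$ by an appropriate power of $F_Y$) a polynomial in $x$ and $\phi(x)$; tracking degrees, $W_p$ times a power of $F_Y$ becomes a polynomial $P_p(x,\phi(x))$ whose total degree in $(x,\phi)$ is $O(p^2 d)$ (roughly: $p$ rows, row $q$ contributing a differentiation-order-$q$ term of degree $O(qd)$, summing to $O(p^2 d)$; plus $O(pt)$ from the monomial exponents $\alpha_i,\beta_i$ which can be taken mod the relevant values — but one must be careful, since $\alpha_i,\beta_i$ are unbounded, that the Wronskian structure lets the large exponents factor out as a common power of $x$ and $\phi$ that does not vanish on $I_j$). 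The roots of $W_p$ on $I_j$ are then among the roots of the system $F(x,y)=P_p(x,y)=0$, and by Lemma~\ref{Lem_realBezout} (with $\delta=\deg P_p=O(p^2d+pt)$ and the degree of $F$ being $d$) this contributes $O(d\cdot\delta)=O(d^2p^2+dpt)$ solutions — hence $Z(W_p)$ is $O(d^2 p^2 + dpt)$ on each interval (assuming the relevant system is finite; the infinite case forces $F$ to divide $P_p$ and is handled separately, again via Lemma~\ref{Lem_realBezout} or by noting $W_p\equiv 0$, contradicting linear independence).

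Summing the Wronskian bound of Theorem~\ref{thm_Wronskien} over $p=1,\ldots,t$ gives, per branch per interval, $O\!\left(\sum_{p=1}^{t}(d^2p^2+dpt)\right)=O(d^2t^3+dt^3)=O(d^2t^3)$; actually one must double-check whether it is $d^2t^3$ or the stated $d^2t^3 + d^3t$ — the $d^3 t$ term should arise from the number of intervals/branches rather than from within a single Wronskian count. Indeed, there are $m=O(d^2)$ intervals and on each at most $d$ branches, so $O(d^3)$ interval-branch pairs; multiplying the per-pair root count $O(d^2t^3)$ would give $O(d^5 t^3)$, which is too large, so the accounting must be arranged differently: one instead bounds, for each branch, the contribution more carefully, or one observes that across all branches on one interval the Wronskians can be taken with respect to a common ordering and the total is still $O(d^2 t^3)$ per interval, giving $O(d^4 t^3)$ overall — still too large. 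The resolution (and the step I expect to be the main obstacle) is the bookkeeping that yields exactly $O(d^3 t + d^2 t^3)$: the $d^3 t$ piece comes from the "$t-1$" term in Theorem~\ref{thm_Wronskien} summed over the $O(d^3)$ interval-branch pairs (giving $O(d^3 t)$), together with the boundary/degenerate points (roots of $R$, multiplicities), and the $d^2 t^3$ piece comes from the Wronskian roots where the degree bound must be set up so that the power of $d$ does not compound with the number of intervals — presumably by clearing the $F_Y$ denominators globally and invoking Bézout once against $F$ of degree $d$ rather than per-interval. Finally, I would add back the $O(d^2)$ boundary points, the $O(d^2)$ singular zeros contributed via Corollary~\ref{Cor_realBezout}/Lemma~\ref{Lem_decompose}, and symmetrically handle the branches coming from $\deg_X(F)\geq 1$ if we had to swap variables; collecting everything yields the $O(d^3t+d^2t^3)$ bound of Proposition~\ref{Thm_irredSystem}. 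The delicate accounting of how the number of CAD cells (which costs powers of $d$) multiplies against the Wronskian degree bounds (which cost powers of $t$ and $d$) without producing spurious extra factors of $d$ is the crux of the argument.
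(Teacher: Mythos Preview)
Your overall approach is the same as the paper's, and you correctly diagnose at the end that the crux is to ``invoke B\'ezout once against $F$ of degree $d$ rather than per-interval.'' But you stop at ``presumably'' precisely where the actual content of the proof lies, so as written the argument has a genuine gap.

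The point you are missing is the reason \emph{why} a single global B\'ezout application suffices. In Lemma~\ref{Lem_higherDerivative} the polynomials $S_k$ depend only on $k$ and $F$, not on the particular branch $\phi_{j,\ell}$. Feeding this into Lemma~\ref{lem_power} and the Leibniz expansion, the paper shows (its Lemma~\ref{Lem_T_S}) that after factoring out the common monomial $x^{\alpha-\binom{s}{2}}\phi^{\beta-\binom{s}{2}}$ and the denominator $F_Y^{s(s-1)}$, the remaining factor $T_s(X,Y)$ is a \emph{single} polynomial of degree at most $(1+2d)\binom{s}{2}$ in each variable, independent of the interval $I$ and of the branch index $i$. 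Thus the roots of \emph{all} the Wronskians $W_s$ across \emph{all} intervals and branches are governed by the one system $F=T_s=0$, and Lemma~\ref{Lem_realBezout} gives $r^s=\sum_{I,i} r^s_{I,i}\leq d^2/4+2d(1+2d)\binom{s}{2}=O(d^2 s^2)$. Summing over $s\leq t$ yields the $O(d^2t^3)$ term; the $O(d^3t)$ term comes only from the ``$t-1$'' in Theorem~\ref{thm_Wronskien} multiplied by the $O(d^3)$ interval--branch pairs, plus the $O(d^3)$ boundary contributions of Lemma~\ref{rootsonlines}. Without the observation that $T_s$ is branch-independent, you are stuck at the $O(d^5t^3)$ over-count you noticed.

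A second, related gap is your treatment of linear dependence. Handling it per interval (``$H$ lies in the span of a subfamily of size $t'<t$'') breaks the global argument: different intervals could require different subfamilies, and then there is no single $T_s$. The paper instead observes that if some nontrivial $\sum c_j x^{\alpha_j}\phi(x)^{\beta_j}\equiv 0$ on one branch, then by irreducibility of $F$ the polynomial $H=\sum c_j X^{\alpha_j}Y^{\beta_j}$ is divisible by $F$; one then replaces $G$ by $G-\frac{a_u}{c_u}H$, which is $(t-1)$-sparse with the same solution set, and proceeds by induction on $t$. This global reduction is what guarantees linear independence on \emph{every} branch simultaneously, hence nonvanishing of the Wronskian and finiteness of $r^s$.

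Finally, your degree estimate $O(p^2 d + pt)$ for $P_p$ is off: once the common factor $x^{\alpha-\binom{p}{2}}\phi^{\beta-\binom{p}{2}}$ is extracted, the exponents $\alpha_j,\beta_j$ disappear entirely from $T_p$, so its degree is $O(dp^2)$ with no $t$-dependence. You flagged this possibility but did not carry it out; it is essential for the final bound.
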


We first explain why this proposition implies Theorem~\ref{Cor_genSystem}. 
Let us begin by removing the hypothesis that the system 
has a finite number of solutions.

\begin{corollary}[Corollary of Proposition~\ref{Thm_irredSystem}]\label{Cor_irredSystem}
  Consider again a nonzero bivariate polynomial $F$ of degree $d$ and 
 a bivariate $t$-sparse polynomial $G$. Assume moreover that $F$ 
is irreducible in $\cc[X,Y]$.
  The set of real solutions of~(\ref{Eq_system})  has a number of connected components which is $O(d^3t+d^2t^3)$.
\end{corollary}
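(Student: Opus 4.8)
The plan is to split into two cases according to whether the set of real solutions of~(\ref{Eq_system}) is finite or infinite, and in each case reduce to a result already at hand. In the finite case there is nothing to do beyond quoting Proposition~\ref{Thm_irredSystem}: it provides $O(d^3t+d^2t^3)$ distinct real solutions, and since a finite subset of $\rr^2$ is discrete, its number of connected components equals its cardinality. So the whole question is the infinite case.

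In the infinite case I would first show that $F$ divides $G$ in $\rr[X,Y]$. Here is where irreducibility over $\cc$ (rather than merely over $\rr$) is used: since $\deg F=d\geq 1$ and $F$ is irreducible in $\cc[X,Y]$, if $F\nmid G$ then $F$ and $G$ are coprime in $\cc[X,Y]$, and the qualitative form of B\'ezout's theorem shows that the complex curves $\{F=0\}$ and $\{G=0\}$ meet in only finitely many points of $\cc^2$; in particular the real solution set, which is contained in this intersection, would be finite, contradicting our assumption. (Equivalently, $\{F=0\}$ is an irreducible affine curve in $\cc^2$ on infinitely many of whose points $G$ vanishes, hence $G$ vanishes on all of $\{F=0\}$, hence $F\mid G$ by the Nullstellensatz, using that $(F)$ is prime.) Consequently the real solution set of~(\ref{Eq_system}) coincides with the real zero set $Z=\{(x,y)\in\rr^2: F(x,y)=0\}$.

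It then remains only to bound the number of connected components of $Z$. This follows immediately from Theorem~\ref{BCSSbezout} applied with $n=2$ and $p=1$: the number of connected components of $Z$ is at most $d(2d-1)=O(d^2)$, which is in particular $O(d^3t+d^2t^3)$ (here we use $t\geq 1$, i.e.\ $G\neq 0$; when $G\equiv 0$ the solution set is $Z$ and the same $d(2d-1)$ bound applies directly). This completes the reduction.

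I do not expect a genuine obstacle in this argument, since all the substantial work is contained in Proposition~\ref{Thm_irredSystem}; the only step requiring any care is the implication ``infinitely many common real zeros $\Rightarrow F\mid G$'', which is precisely the point where the hypothesis that $F$ is irreducible over $\cc$ is essential.
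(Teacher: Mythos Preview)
Your proposal is correct and follows essentially the same approach as the paper: split into the finite and infinite cases, invoke Proposition~\ref{Thm_irredSystem} in the finite case, and in the infinite case deduce that $F\mid G$ (so the solution set is the real zero set of $F$) and bound its components via Theorem~\ref{BCSSbezout}. You supply a bit more justification for the implication ``infinitely many common zeros $\Rightarrow F\mid G$'' than the paper does, but the structure and ingredients are the same.
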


\begin{proof}
  There are two cases: 
  \begin{enumerate}
  \item 
The system has a finite set of real solutions.
In this case, by Proposition~\ref{Thm_irredSystem}
    there at most  $O(d^3t+d^2t^3)$ solutions.

  \item The set of solutions is infinite.
This implies that $F$ and $G$ share a common factor. Since $F$ is
    irreducible in $\cc$, $F$ must be a factor of $G$. But in this case
    the set of solutions of~(\ref{Eq_system}) is exactly the
    set of zeros  of $F$. By Theorem~\ref{BCSSbezout}, this set has at most 
$d(2d-1)$ connected components.
  \end{enumerate}
\end{proof}

\begin{proof}[Proof of Theorem~\ref{Cor_genSystem} from
  Corollary~\ref{Cor_irredSystem}] 
  By Lemma~\ref{Lem_decompose}, the set of real roots of $F$ is
  the union of the set of real roots of the real irreducible
  factors $F_1,\ldots,F_k$ of $F$ and of a set $\mathcal{U}$ of
  cardinality at most $d^2/4$. Consequently, the number of connected
  components of the set of solutions of~(\ref{Eq_system}) is bounded by the
  sum of the numbers of connected components of the solutions of the
  systems $F_i(x,y)=G(x,y)=0$ for $i\leq k$ and of the system \begin{align*}
    \begin{cases}(X,Y) \in \mathcal{U} \\ G(X,Y)=0. \end{cases}
  \end{align*} 
  The latter system has at most $d^2/4$ solutions.
By Corollary~\ref{Cor_irredSystem}, each
  system $F_i(x,y)=G(x,y)=0$ has at most $O((\deg F_i)^3t+(\deg F_i)^2t^3)$ 
connected components. To conclude, we observe that
  \begin{align*}
    \sum_{i=1}^k \left((\deg F_i)^3t+(\deg F_i)^2t^3\right) & \leq
    \left(\sum_{i=1}^k \deg F_i\right)^3t+\left(\sum_{i=1}^k
      \deg F_i\right)^2t^3 \\
    & \leq d^3t+d^2t^3.
  \end{align*}
\end{proof}

\begin{remark}
  Note that the non-zero condition on $F$ in
  Theorems~\ref{Cor_genSystem} and Proposition~\ref{Thm_irredSystem}
  is important.  Indeed, it is an open problem whether there exists a
  polynomial $P(t)$ which bounds the number of real solutions of any
  system of two $t$-sparse polynomials $G$ and $H$ when this bound is
  finite. However, if we allowed the polynomial $F$ to be $0$ in
  Theorems~\ref{Cor_genSystem}, we would be able to code a system of
  two sparse equations in the system: \begin{align}
    \begin{cases} F=0 \\ G(X,Y)^2+H(X,Y)^2=0. \end{cases}
  \end{align}
\end{remark}

It remains to prove Proposition~\ref{Thm_irredSystem}. In the
following, we will suppose that the system has a finite number of real
solutions. We begin with two basis cases.

\begin{enumerate}
\item If $F(X,Y)=c Y$, then as $G(X,0)\neq 0$ (otherwise $(x,0)$ is a
  solution of~(\ref{Eq_system}) for all $x$ in $\mathbb{R}$), by
  Descartes' rule, the number of roots of the form $(x,0)$ is bounded
  by $2t-1$.

\item If $F_Y(X,Y) = 0$, then $F$ does not depend on $Y$ and there are
  at most $d$ values of $x$ such that $F(x,Y)=0$. For every such value,
  $G(x,Y)$ is a univariate $t$-sparse polynomial so it has at most
  $2t-1$ distinct real roots. Hence, in this case there are at most
  $2td-d$ solutions to~(\ref{Eq_system}).
\end{enumerate}

We have therefore verified the bound of Proposition~\ref{Thm_irredSystem}  in
these two particular cases. 
We will assume in the following we are not in case~1 or~2.

Let us consider the univariate polynomial $\rm{Res}(F,F_Y)$, which is of degree
at most $2d^2-d$ and which is not zero by Lemma~\ref{Lem_Res}. Let $x_1 <\ldots <x_q$ with $q\leq 2d^2-d$ be the real roots
of this polynomial and let $\mathcal{I}=\{(x_i,x_{i+1}) | 0\leq i\leq q\}$ with
$x_0=-\infty$ and $x_{q+1}=+\infty$, be the corresponding set of  open
intervals. We notice that $|\mathcal{I}|\leq 2d^2-d+1$. If $I$ is in $\mathcal{I}$, the roots of $F$ are delineable
on $I$ by Corollary~\ref{Cor_criterDelineable}.

From the definition of delineability, for each interval $I$ in
$\mathcal{I}$, there are $m_I\leq d$ continuous real-valued functions
$\phi_{I,1}<\ldots<\phi_{I,m_I}\ :\ I\rightarrow \rr$ such that 
$F(x,y)=0$ on $I\times\rr$ if and only if there exists $i\leq m_I$ such that $y=\phi_{I,i}(x)$. Moreover, $F_Y(x,\phi_{I,i}(x)) \neq 0$ since
$\rm{Res}(F,F_Y)$ does not vanish on $I$ (see Remark~\ref{Rk_A_Y}).
 The analytic version of the implicit function theorem 
therefore shows that the functions 
$\phi_{I,i}$ are analytic on $I$.

Let us denote $\Omega = \bigcup_{I\in \mathcal{I}} I$. We bound
separately the number $s$ of solutions of system~(\ref{Eq_system}) on
$\Omega\times\rr$ and the number $s^\prime$ of solutions on
$\left(\rr\setminus\Omega\right)\times \rr$. 

\begin{lemma} \label{rootsonlines}
  If $F_Y(X,Y)$ is a non-zero polynomial, the number $s^\prime$ of
  solutions on $\left(\rr\setminus\Omega\right)\times \rr$ of
  System~(\ref{Eq_system}) is at most $2d^3-d^2$.
\end{lemma}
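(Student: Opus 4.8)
The plan is to exploit that $\rr \setminus \Omega$ is a \emph{finite} set. By construction, $\Omega = \bigcup_{I \in \mathcal{I}} I = \rr \setminus \{x_1, \ldots, x_q\}$, so $\rr \setminus \Omega$ consists of exactly the $q$ real roots $x_1 < \cdots < x_q$ of $\mathrm{Res}(F,F_Y)$, and we recall that $q \le 2d^2 - d$. It therefore suffices to bound, for each fixed $i$, the number of $y \in \rr$ with $F(x_i, y) = 0$, and then to sum the resulting bounds over $i$; in fact we will not even need the second equation $G = 0$ for this estimate.

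First I would check that the univariate polynomial $Y \mapsto F(x_i, Y)$ is not identically zero. If it were, then $X - x_i$ would divide $F$ in $\rr[X,Y]$; writing $F = (X - x_i)\,H$ and using that $F$ is irreducible in $\cc[X,Y]$ while $X - x_i$ is not a unit, we would conclude that $H$ is a nonzero constant, i.e. $F = c\,(X - x_i)$, and hence $F_Y \equiv 0$. This contradicts the hypothesis of the lemma. Consequently $F(x_i, Y)$ is a nonzero polynomial of degree at most $\deg_Y F \le d$, so the equation $F(x_i, y) = 0$ has at most $d$ real solutions $y$.

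Putting the two observations together, over each of the at most $2d^2 - d$ points of $\rr \setminus \Omega$ the system has at most $d$ solutions, whence
\[
  s' \le d\,(2d^2 - d) = 2d^3 - d^2.
\]

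There is no genuine obstacle in this argument; the only step that requires a moment's attention is ruling out $F(x_i, Y) \equiv 0$, which is precisely where the irreducibility of $F$ together with $F_Y \not\equiv 0$ enters.
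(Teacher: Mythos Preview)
Your proof is correct and follows essentially the same approach as the paper: use that $\rr\setminus\Omega=\{x_1,\ldots,x_q\}$ has at most $2d^2-d$ points, observe that $X-x_i$ cannot divide $F$ (by irreducibility together with $F_Y\not\equiv 0$), and conclude that each fiber $F(x_i,Y)$ has at most $d$ roots. The only difference is that you spell out the implication ``$X-x_i\mid F \Rightarrow F_Y\equiv 0$'' in slightly more detail than the paper does.
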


\begin{proof}
  We recall that $\left(\rr\setminus\Omega\right)
= \{x_1,\ldots,x_q\}$ is a finite set of cardinality at most
  $2d^2-d$. For each $i\leq q$, $X-x_i$ does not divide $F$ since $F$
  is irreducible and $F_Y\neq 0$. So the number of roots of $F$ on $\{x_i\}\times\rr$ is finite
  and bounded by $d$. Consequently, $s^\prime \leq 2d^3-d^2$.
\end{proof}

Now, we want to bound the number $s$ of solutions on $\Omega\times\rr$.
To do so, we will bound the number  $s_j^I$ (with $j\leq m_I$)
of solutions of the following system over $I\times\rr$:
\begin{align}\label{Eq_newSystem}
\begin{cases} Y=\phi_j(X) \\ G(X,Y)=0. \end{cases} 
\end{align}
Hence, $\sum_I\sum_{0\leq j\leq m_I} s_j^I=s$ and in particular all
the $s_j^I$ are finite.

The polynomial $G$ is $t$-sparse, so
$G(X,Y)=\sum_{j=1}^ta_jX^{\alpha_j}Y^{\beta_j}$. Then, if $(x,y)$ is a
root of~(\ref{Eq_newSystem}), we have
$G(x,\phi_i(x))=\sum_{j=1}^ta_jx^{\alpha_j}(\phi_i(x))^{\beta_j}=0$.


Let us assume that there exist real constants $c_1,\ldots,c_t$ (not
all zero) such
that $H(X,Y)=\sum_{j=1}^t c_jX^{\alpha_j}Y^{\beta_j}$ is a multiple of
$F$. In this case, we can consider the polynomial
$\tilde{G}(X,Y)=G-\frac{a_u}{c_u}H$ which is $t-1$ sparse (where $c_u$
is a non-zero coefficient of $H$). Then,
the roots of~(\ref{Eq_system}) are exactly the roots of the
following system:
\begin{align} \label{induction}
  \begin{cases} F(X,Y)=0 \\ \tilde{G}(X,Y)=0.
  \end{cases}  
\end{align}
In this system, the first polynomial has not changed and the number of
terms of the second polynomial has decreased.  We can therefore assume
(by induction on~$t$) that the claimed $O(d^3t+d^2t^3)$ upper bound on
the number of real solutions applies to~(\ref{induction}). We will
therefore assume for the remainder of the proof that if
$H(X,Y)=\sum_{j=1}^t c_jX^{\alpha_j}Y^{\beta_j}$ is a multiple of $F$
then all the constants $c_j$ are zero.

Before stating the next lemma, we recall that $\cal I$ is a finite
list of open intervals defined before Lemma~\ref{rootsonlines} and
that if we want to use Theorem~\ref{thm_Wronskien} for bounding the number of
zeros of $f_1+\ldots+f_t$, we need to bound the number of zeros of
$W(f_1,\ldots,f_s)$ for each $s\leq t$.
\begin{lemma}\label{Lem_T_S}
  For any $s\leq t$, there exists a non-zero polynomial $T_s(X,Y)\in\rr[X,Y]$ of
  degree at most $(1+2d)\binom s2$ in each variable such that for every 
  interval $I$ in $\mathcal{I}$ and every $0\leq i\leq m_I$, 
the Wronskian of the $s$ functions 
$x^{\alpha_1}(\phi_i(x))^{\beta_1},\ldots,x^{\alpha_s}(\phi_i(x))^{\beta_s}$
satisfies:
  \begin{align*}
    W(x^{\alpha_1}(\phi_i(x))^{\beta_1},\ldots,x^{\alpha_s}(\phi_i(x))^{\beta_s})
    = \frac{x^{\alpha-\binom s2}\phi_i^{\beta-\binom s2}}{F_Y^{s(s-1)}(x,\phi_i)} T_s(x,\phi_i)
  \end{align*} where $\alpha=\sum_{j=1}^s\alpha_j$ and $\beta=\sum_{j=1}^s\beta_j$.
Moreover, this Wronskian is not identically 0 on $I$.
\end{lemma}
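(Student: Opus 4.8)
The plan is to compute the Wronskian $W(f_1,\ldots,f_s)$ for the basis functions $f_j(x)=x^{\alpha_j}\phi_i(x)^{\beta_j}$ explicitly enough to extract the common factors displayed in the statement, and then to argue non-vanishing separately. First I would write each $f_j$ as $x^{\alpha_j}\psi^{\beta_j}$ with $\psi=\phi_i$, and observe that the entries of the Wronskian matrix are the derivatives $f_j^{(\ell)}$ for $0\le\ell\le s-1$. Using the Leibniz rule together with Lemma~\ref{lem_power}, the derivative $(\psi^{\beta_j})^{(k)}$ is a linear combination of terms $\psi^{\beta_j-|\sigma|}\prod_m (\psi^{(m)})^{\sigma_m}$ with $\sum m\sigma_m=k$, and $(x^{\alpha_j})^{(k)}$ contributes a factor $x^{\alpha_j-k}$ times a constant depending only on $\alpha_j$. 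Hence every entry $f_j^{(\ell)}$ carries a factor $x^{\alpha_j-\ell}\psi^{\beta_j-\ell}$ (the smallest power of $\psi$ arising, with the remaining positive-order derivatives $\psi^{(m)}$, $m\ge1$, absorbed into a factor of bounded degree). Pulling $x^{\alpha_j}\psi^{\beta_j}$ out of column $j$ and $x^{-(\ell-1)}\psi^{-(\ell-1)}$ out of row $\ell$ (for $\ell=1,\ldots,s$) yields the prefactor $x^{\alpha-\binom s2}\psi^{\beta-\binom s2}$, leaving a determinant whose entries are polynomials in $x$, $\psi$, and the $\psi^{(m)}$.

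The next step is to eliminate the higher derivatives $\psi^{(m)}$ in favor of $x$ and $\psi$ using Lemma~\ref{Lem_higherDerivative}: each $\psi^{(m)}=S_m(\ldots)/F_Y(x,\psi)^{2m-1}$, where the numerator is a polynomial of degree $\le(2m-1)d$ in $(x,\psi)$. Substituting these into the reduced determinant and clearing denominators, I expect the worst-case denominator to be a power of $F_Y(x,\psi)$, and a bookkeeping argument (summing the orders of differentiation appearing in the Leibniz expansion of an $s\times s$ Wronskian, which total at most $\binom s2$ across each term of the determinant expansion) should show that $F_Y^{s(s-1)}$ suffices to clear all denominators. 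What remains after multiplying through by $F_Y^{s(s-1)}$ is a polynomial $T_s(x,\psi)$; tracking the degree contributions — the $\binom s2$ total order of differentiation, each unit of which contributes $\le 2d-1$ extra degree via $S_m$ plus the degree-$d$ cost of differentiating $F_Y$-type expressions, together with the $F_Y^{s(s-1)}$ multiplier of degree $(d-1)s(s-1)$ — should land at the claimed bound $(1+2d)\binom s2$ in each variable. I would set up the induction or direct degree count carefully here, since this is where the exact constant is pinned down, but it is routine.

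The genuinely delicate part is the last sentence: showing $T_s(x,\phi_i)$ (equivalently the Wronskian) is not identically zero on $I$. Since the $f_j$ are analytic on $I$ and the Wronskian vanishes identically iff the $f_j$ are linearly dependent over $\rr$ on $I$, it suffices to show $x^{\alpha_1}\phi_i^{\beta_1},\ldots,x^{\alpha_s}\phi_i^{\beta_s}$ are linearly independent on $I$. Suppose not: then $\sum_j c_j x^{\alpha_j}\phi_i(x)^{\beta_j}=0$ for all $x\in I$ with not all $c_j$ zero. Because $\phi_i$ parametrizes (a branch of) the curve $F=0$ over $I$ and this branch is one-dimensional, the Zariski closure of the graph $\{(x,\phi_i(x)):x\in I\}$ contains the zero set of the irreducible polynomial $F$; hence the polynomial $H(X,Y)=\sum_j c_j X^{\alpha_j}Y^{\beta_j}$ vanishes on the zero set of $F$, and by irreducibility of $F$ and (a form of) the Nullstellensatz $F$ divides $H$. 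But we arranged earlier in the proof — precisely by the $\tilde G$ reduction — that any such sparse $H$ that is a multiple of $F$ must have all coefficients zero, a contradiction. This establishes non-vanishing on every $I$ and for every branch $i$ simultaneously. I expect this linear-independence/divisibility step to be the main obstacle, mainly in making the passage "graph of $\phi_i$ is Zariski dense in $\{F=0\}$, so $H|_{\{F=0\}}=0$" fully rigorous (handling the possibility that $\{F=0\}$ has real dimension issues, which is why one works with $F$ irreducible over $\cc$ and uses that a nonzero polynomial vanishing on a real analytic branch of a complex irreducible curve must be divisible by $F$).
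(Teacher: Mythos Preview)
Your proposal is correct and follows essentially the same approach as the paper: compute each entry $(x^{\alpha_j}\phi_i^{\beta_j})^{(p)}$ via Leibniz, Lemma~\ref{lem_power}, and Lemma~\ref{Lem_higherDerivative} to exhibit it as $x^{\alpha_j-p}\phi_i^{\beta_j-p}F_Y^{-2p}T_{j,p}(x,\phi_i)$ with $\deg T_{j,p}\le (1+2d)p$, factor columns and rows to get the stated prefactor and $T_s=\det(T_{j,p})$, and handle non-vanishing by the linear-independence $\Rightarrow F\mid H$ contradiction against the prior $\tilde G$-reduction. The only remark is that the divisibility step you flag as delicate is in fact immediate in the paper's setup: since the graph of $\phi_i$ gives infinitely many common zeros of $F$ and $H$, B\'ezout forces a common factor, and irreducibility of $F$ in $\cc[X,Y]$ gives $F\mid H$ directly---no Zariski-closure or Nullstellensatz machinery is needed.
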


\begin{proof}
  Let $I$ be a an interval in $\mathcal{I}$ and $i$ be an integer
  between $0$ and $m_I$. If $\sum_{j=1}^t
  c_jx^{\alpha_j}\phi_i^{\beta_j}=H(x,\phi_i(x))$ is the zero polynomial, then $F$ divides $H$ 
by irreducibility of $F$.
 It then follows that $H \equiv 0$ by the assumption preceding the lemma. 
The family $x\mapsto
  x^{\alpha_j}(\phi_i(x))^{\beta_j}$  is therefore linearly independent. 
As the functions are analytic on $I$, the Wronskian
  $W(x^{\alpha_1}(\phi_i(x))^{\beta_1},\ldots,x^{\alpha_s}(\phi_i(x))^{\beta_s})$
  is not identically zero.
  By Remark~\ref{Rk_A_Y}, $F_Y(x,\phi_i(x))$ has no zeros on $I$.
  Then using Lemmas~\ref{lem_power} and \ref{Lem_higherDerivative}, 
  \begin{align*}
    \left(x^{\alpha_j}(\phi_i(x))^{\beta_j}\right)^{(p)} & =
    \sum_{k=0}^p \binom pk \left(x^{\alpha_j}\right)^{(k)}
    \left(\phi_i(x)^{\beta_j}\right)^{(p-k)} \\
    & = \sum_{k=0}^p \binom pk \left(x^{\alpha_j}\right)^{(k)}
    \sum_{s\in \mathcal{S}_{p-k}}\left[c_{\beta_j,s}\phi_i^{\beta_j-|s|}\prod_{l=1}^{p-k}\left(\phi_i^{(l)}\right)^{s_l}\right]\\
    & = x^{\alpha_j-p}\phi_i^{\beta_j-p}\sum_{k=0}^p \sum_{s\in
      \mathcal{S}_{p-k}} \left[c^\prime_{\alpha_j,\beta_j,p,s}
      x^{p-k}\phi_i^{p-|s|}\prod_{l=1}^{p-k}\left(\frac{S_l}{F_Y^{2l-1}}\right)^{s_l}\right]\\
    & = \frac{x^{\alpha_j-p}\phi_i^{\beta_j-p}}{F_Y^{2p}}\sum_{k=0}^p
    \sum_{s\in \mathcal{S}_{p-k}}
    \left[c^\prime_{\alpha_j,\beta_j,p,s} x^{p-k}\phi_i^{p-|s|}
      F_Y^{2k+|s|}
      \prod_{l=1}^{p-k} S_l^{s_l}\right]\\
    & =\frac{x^{\alpha_j-p}\phi_i^{\beta_j-p}}{F_Y^{2p}}
    T_{j,p}(x,\phi_i).
  \end{align*} 
  We saw in Lemma~\ref{Lem_higherDerivative} that $\deg(S_l) \leq
  2l-1$. As a result, $T_{j,p}(X,Y)$ is a
  polynomial of degree in $X$ bounded by
\begin{align*}
  \deg_X(T_{j,p}) & \leq \max_{k,s}\left(p-k + (2k+|s|)d+\sum_{l=1}^{p-k} s_l(2l-1)d\right)
  \\
  & \leq \max_{k,s}\left(p-k+2kd+|s|d+2d(p-k)-d|s|\right) \\
  & \leq 2dp+p
\end{align*} and of degree in $Y$ bounded by
\begin{align*}
  \deg_Y(T_{j,p}) & \leq
  \max_{k,s}\left(p-|s|+(2k+|s|)(d-1)+\sum_{l=1}^{p-k}s_l(2l-1)d\right) \\
  & \leq \max_{k,s}\left(p-|s|+2kd-2k+|s|d-|s|+2dp-2dk-d|s|\right) \\
  & \leq \max_{k,s}\left(p-2|s|-2k+2dp\right) \\
  & \leq p+2dp.
\end{align*}
Moreover $T_{j,p}$ does not depend on $\phi_i$ by Lemma~\ref{Lem_higherDerivative}.
Hence, the Wronskian is a bivariate rational function:
\begin{align*}
  W\left(x^{\alpha_1}(\phi_i(x))^{\beta_1},\ldots,x^{\alpha_s}(\phi_i(x))^{\beta_s}\right)
  =\frac{x^{\alpha-\binom s2}\phi_i^{\beta-\binom s2}}{F_Y^{s(s-1)}(x,\phi_i)} T_s(x,\phi_i)
\end{align*} where $\alpha=\sum_{j=1}^s\alpha_j$,
$\beta=\sum_{j=1}^s\beta_j$ and $T_s(X,Y)$ is a polynomial of degree
bounded by $(1+2d)\binom s2$ in each variable, which does not depend
on $I$ and $i$.
\end{proof}

Let us count the number $v_{I,i}$ of roots of $\phi_i$ on $I$. We assumed
that $Y$ does not divide $F$, so the univariate polynomial $F(X,0)$ is
not zero and is of degree at most $d$. If  $v_I$ denotes  the number of roots
of $F(X,0)$ on $I$, we have
 $\left(\sum_{I\in\mathcal{I}}v_I\right)\leq d$. 
Since each root of $\phi_i$ is by definition a root of $F(X,0)$,
this implies  that $\phi_i$ has at most $v_I$ roots on~$I$.

For any $I$ and $i$, let us count now the number $r_{I,i}^s$ of roots of
$T_s(x,\phi_i(x))$. This number is finite since the Wronskian of 
Lemma~\ref{Lem_T_S} would otherwise be identically 0. 
Furthermore, let us denote by $r^s$ the number of solutions on 
$\Omega\times\rr$ of the system
\begin{align} \label{Syst_TF}
  \begin{cases}
    T_s\left(X,Y\right)=0 \\
    F\left(X,Y\right)=0.
  \end{cases}
\end{align}
Thus, $r^s=\left(\sum_I\sum_ir_{I,i}^s\right)$ is finite.

Finally, by Lemma~\ref{Lem_realBezout} and as the total degree of
$T_s$ is bounded by $2(1+2d)\binom s2$, the number $r^s$ of roots of (\ref{Syst_TF}) is
bounded by $d^2/4+2d(1+2d)\binom s2$.

Then, by Lemma~\ref{Lem_T_S}, for any $I$ and $i$,
$W\left(x^{\alpha_1}(\phi_i(x))^{\beta_1},\ldots,x^{\alpha_s}(\phi_i(x))^{\beta_s}\right)$
has at most $\mathbbm{1}_I(0)+v_I+r_{I,i}^s$ real roots and 
Theorem~\ref{thm_Wronskien} shows that the number $s_{I,i}$ 
of distinct real roots of
$G(x,\phi_i(x))$ is bounded by 
\begin{align*}
  t-1+2\sum_{s=1}^t\left(\mathbbm{1}_I(0)+v_I+r_{I,i}^s\right) = t-1 +2t \mathbbm{1}_I(0)+2tv_I +2\sum_{s=1}^tr_{I,i}^s. 
\end{align*}

Hence, as $|\mathcal{I}|\leq 2d^2-d+1$, as $m_I\leq d$ for each interval
$I$ in $\mathcal{I}$, and as $\left(\sum_{I\in\mathcal{I}}v_I\right)\leq
d$,
\begin{align*}
  & s=\sum_{I\in\mathcal{I}}\sum_{i\leq m_I}s_{I,i} \\
  & \leq (2d^2-d+1)d(t-1)+2dt+2td^2+2\sum_{s=1}^tr^s \\
  & \leq (2d^2-d+1)d(t-1)+2dt+2td^2+2\sum_{s=1}^t \left[d^2/4+2(1+2d)\binom s2 d\right] \\
  & =(2d^3t+4d^2t^3)(1+o(1)).
\end{align*}

This completes the proof of Proposition~\ref{Thm_irredSystem}, and of the main 
theorem.

\section*{Acknowledgments}

We thank the two referees for suggesting several improvements in the presentation of the paper.


\end{document}